\DeclareMathAlphabet{\pazocal}{OMS}{zplm}{m}{n}
\newcommand{\mb}{\mathbb}
\let\bbordermatrix\bordermatrix
\patchcmd{\bbordermatrix}{8.75}{4.75}{}{}
\patchcmd{\bbordermatrix}{\left(}{\left[}{}{}
\patchcmd{\bbordermatrix}{\right)}{\right]}{}{}
\newcommand{\sr}{\stackrel}
\newcommand{\rar}{\rightarrow}
\newcommand{\tri}{\sr{\triangle}{=}}
\newcommand{\be}{\begin{equation}}
\newcommand{\ee}{\end{equation}}
\newcommand{\bea}{\begin{eqnarray}}
\newcommand{\eea}{\end{eqnarray}}
\newcommand{\bes}{\begin{eqnarray*}}
\newcommand{\ees}{\end{eqnarray*}}
\newcommand{\bce}{\begin{center}}
\newcommand{\ece}{\end{center}}
\newcommand{\beae}{\begin{IEEEeqnarray}{rCl}}
\newcommand{\eeae}{\end{IEEEeqnarray}}
\def\VR{\kern-\arraycolsep\strut\vrule &\kern-\arraycolsep}
\def\vr{\kern-\arraycolsep & \kern-\arraycolsep}
\newcommand{\ben}{\begin{enumerate}}
\newcommand{\een}{\end{enumerate}}
\newcommand{\hso}{\hspace{.1in}}
\newcommand{\hst}{\hspace{.2in}}
\newtheorem{theorem}{Theorem}[section]
\newtheorem{remark}{Remark}[section]
\newtheorem{corollary}{Corollary}[section]
\newtheorem{assumptions}{Assumptions}[section]
\newtheorem{definition}{Definition}[section]
\begin{document}

%\baselineskip=16pt
%\sloppy

%% Paper Title
%% You can use linebreaks \\ within to get better formatting as
%% desired.

\title{A Riccati-Lyapunov Approach to Nonfeedback Capacity of MIMO Gaussian Channels Driven by  Stable and Unstable Noise\\
%{\footnotesize \textsuperscript{*}Note: Sub-titles are not captured in Xplore and
%should not be used}
}
%\author{Charalambos D. Charalambous Stelios Louka and Jan H. van Schuppen
%\thanks{C. D. Charalambous and S. Louka are with the Department of Electrical Engineering, University of Cyprus, Nicosia, Cyprus. Jan H. van Schuppen is with Van Schuppen Control Research, Gouden Leeuw 143, 1103 KB Amsterdam, The Netherlands.  E-mails:
   %\{chadcha@ucy.ac.cy,louka.stelios@ucy.ac.cy,jan.h.van.schuppen@xs4all.nl\}}
%}
\author{\IEEEauthorblockN{ Charalambos D. Charalambous and Stelios Louka}
\IEEEauthorblockA{
\textit{University of Cyprus}\\
chadcha@ucy.ac.cy,louka.stelios@ucy.ac.cy}
}

\maketitle

\begin{abstract}
In this paper it is shown that the nonfeedback capacity of  multiple-input multiple-output (MIMO) additive Gaussian noise (AGN) channels, when the noise is nonstationary and unstable, is characterized by an asymptotic optimization problem that involves,  a generalized matrix algebraic  Riccati  equation (ARE) of filtering theory,  and a matrix Lyapunov equation of stability theory of  Gaussian systems. Furthermore, conditions are identified such that,  the  characterization of nonfeedback capacity corresponds to the uniform asymptotic per unit time limit, over all  initial distributions,  of  the characterization of a  finite block or transmission  without feedback information (FTwFI) capacity, which involves,  two  generalized matrix difference   Riccati  equations (DREs) and a  matrix difference Lyapunov equation. 

% is derived, with  the optimal channel input process   expressed as   functional of  a sufficient statistic and a Gaussian orthogonal innovations process. From the new representations  follows that the MIMO version of the Cover and Pombra  characterization of the FTwFI   capacity  is  expressed as a functional of  two generalized matrix difference Riccati equations (DRE) of filtering theory of Gaussian systems and a Lyapunov equation. The derivations follow directly from   \cite{charalambous2020new}; application examples to autoregressive moving average noise are found in \cite{louka-kourtellaris-charalambous:2020b}, and to autoregressive  noise in  \cite{kourtellaris-charalambous-loyka:2020b,kourtellaris-charalambous-loyka:2020a}. Asymptotic formulas of nonfeedback, capacity, and achievable lower bound  incurred by asymptotically stationary channel inputs follow from the analysis of \cite[Section~III]{charalambous2020new}.
%%The new characterization, although, more  complex than  existing formulas that appeared in the literature, its sequential optimization formulation  offers computational advantages, especially for MIMO channels. 
%The paper contains an  in depth analysis, with examples,   of the specific  technical issues, which are overlooked in past literature  
% \cite{kim2010,liu-han2019,gattami2019,ihara2019,li-elia2019}, that  studied    the AGN channel of  \cite{cover-pombra1989}, for stationary noises.
\end{abstract}

%\newpage
%
%\tableofcontents
%\newpage
\section{Introduction, Problem,  and Main Results}
\label{sect:problem}
\par Shannon's information theoretic  definition of nonfeedback capacity of additive Gaussian noise (AGN) channels with memory,  is a fundamental  mathematical tool for  analysis and synthesis of  reliable data transmission over noisy communication channels. The characterization of  nonfeedback capacity in frequency-domain, for stationary or asymptotically stationary channels, i.e., when the channel noise inputs and outputs are asymptotically stationary,  gave rise to the so-called water-filling solution, which is documented in  \cite{gallager1968,ihara1993,cover-thomas2006,yeung2008} and in several research papers, such as Tsybakov \cite{tsybakov2006}. The analysis of   channel capacity for   asymptotically equivalent  matrices for  MIMO Gaussian channels,   is found in  \cite{brandenburg-wyner:1974,gutierrez-crespo:2008} and more recently in \cite{gutierrez-crespo-rodriguez-hogstad:2017}.  For Gaussian channels with intersymbol   interference in  \cite{hirt-massey:1988}. 

Bounds on nonfeedback capacity for single-input single-output (SISO)  AGN channel with stable noise with memory,  are derived in  \cite{yanaki1992,yanaki1994,chen-yanaki1999}, as well as comparisons to feedback capacity. 
 Recently,  sequential time-domain characterizations of nonfeedback capacity for SISO  AGN channels with unstable,  finite-memory autoregressive noise are presented  in \cite{kourtellaris-charalambous-loyka:2020a}.  Sequential  characterization of nonfeedback capacity for SISO AGN channels with general unstable noise with memory,    are obtained in  \cite{charalambous2020new}, and lower bounds in   \cite[Corollary~II.1]{charalambous-kourtellaris-loykaIEEEITC2019}. Additional lower bounds are discussed in 
  \cite{louka-kourtellaris-charalambous:2020b}, \cite{kourtellaris-charalambous-loyka:2020b}, which are equivalent to the  Cover and Pombra \cite{cover-pombra1989} nonfeeback capacity formula. Equivalent characterizations of Cover and Pombra $n-$finite transmission,  or block length for MIMO AGN channels with memory  are presented in \cite{charalambous-kourtellaris-louka:2020a}. 

The purpose of this paper is to derive new results on nonfeedback capacity for  multiple-input multiple-output (MIMO)   AGN channels, driven by unstable,  nonstationary and nonergodic  noise in time-domain,  described by 
\begin{align}
Y_t=H_tX_t+V_t,  \ \  t=1, \ldots, n, \ \ \frac{1}{n}  {\bf E} \Big\{\sum_{t=1}^{n} ||X_t||_{{\mathbb R}^{n_x}}^2\Big\} \leq \kappa \label{g_cp_1} 
\end{align}
where   $\kappa \in [0,\infty)$,   $X_t :  \Omega \rar  {\mathbb X}\tri  {\mathbb R}^{n_x}$,   $Y_t :  \Omega \rar  {\mathbb Y}\tri  {\mathbb R}^{n_y}$,  and   $V_t :  \Omega \rar {\mathbb V}\tri  {\mathbb R}^{n_y}$,  are the channel input, channel output and noise random variables (RVs),  respectively,  $(n_x, n_y)$ are finite positive integers,  $H_t \in {\mathbb R}^{n_y\times n_x}$ is nonrandom and the distribution of the  sequence $V^n = \{ V_1, \ldots, V_n\}$,  i.e.,   ${\bf P}_{V^n}\tri {\mathbb P}\{V_1\leq v_1, \ldots, V_n\leq v_n\}$, is jointly Gaussian. \\
The main fundamental difference from previous characterizations of nonfeedback capacity found in the literature, i.e.,  \cite{gallager1968,ihara1993,cover-thomas2006,yeung2008,tsybakov2006}, is that the consideration of nonergodic, time-varying and unstable noise, gives higher  achievable rates compared to stable noise (see \cite{charalambous2020new,louka-kourtellaris-charalambous:2020b,kourtellaris-charalambous-loyka:2020b,kourtellaris-charalambous-loyka:2020a}). 

{\it Operational Nonfeedback Code.} The code for the MIMO AGN channel (\ref{g_cp_1}), is denoted by  $\{(n, {\cal M}^{(n)}, \epsilon_n, \kappa):n=0, 1, \dots\}$ and consists of  (a) a  set of uniformly distributed  messages $M: \Omega \rar {\cal M}^{(n)} \tri \{ 1,  \ldots, M^{(n)}\}$,  
known to the encoder and decoder, (b) 
a set of  encoder strategies mapping messages $M=m$ and past channel inputs into current inputs, defined by\footnote{The superscript on expectation operator ${\bf` E}^g$  indicates that the corresponding distribution ${\bf P}= {\bf P}^g$  depends on the encoding strategy $g$.} 
\begin{align}
&{\cal E}_n(\kappa) \triangleq  \Big\{g_i: {\cal M}^{(n)}  \times   {\mathbb X}^{i-1}   \rar {\mb X}_i, \:x_1=g_1(m), \;
x_2=g_2(m,x_1), \nonumber \\
& \ldots,  x_n=g_n(m,x^{n-1})\; \Big|\;   \frac{1}{n}   {\bf E}^g    \Big\{\sum_{t=1}^n ||X_t||_{{\mathbb R}^{n_x}}^2\Big\} \leq \kappa \Big\} \label{NCM-6}
\end{align}
where   $g_i(\cdot)$ are measurable maps and (c) 
%The information structure of the controller-encoder is  ${\cal I}_i^g\tri \{M,A^{i-1}, Y^{i-1}\},  A_i=g_i({\cal I}_i^g), i=0,\ldots, n$. \\% A_0=g_0({\cal I}_0^e), {\cal I}_0^e=\{M\}$.\\
% (the control part only knows  $\{\overline{s}, a_0^{i-1}, b_0^{i-1}\}$).\\
a decoder $d_{n}(\cdot):{\mb Y}^n\rar {\cal M}^{(n)}$,  with average probability of decoding error 
\begin{align}
{\bf P}_{error}^{(n)} \triangleq& \frac{1}{M^{(n)}} \sum_{m \in {\cal M}^{(n)}} {\bf  P}^g \Big\{d_{n}(Y^{n}) \neq m \big|  M=m\Big\}
%\nonumber \\
%&\equiv {\bf  P}^g\Big\{d_{n}(Y^n) \neq M \Big\} 
\leq \epsilon_n.  \label{NCM-7} 
\end{align}
Note that ${\bf P}_{error}^{(n)}$ depends on the distribution of $V_1$, i.e., ${\bf P}_{V_1}$.
The messages $M: \Omega \rar {\cal M}^{(n)}$ are assumed independent of the noise sequences $V^n$,  that is $
{\bf P}_{V^n|M}={\bf P}_{V^n}$. 
{\it The code rate} is defined by $r_n\triangleq \frac{1}{n} \log M^{(n)}$.  A rate $R$ is called an {\it achievable rate}, if there exists an encoder and decoder sequence satisfying
$\lim_{n\longrightarrow\infty} {\epsilon}_n=0$ and $\liminf_{n \longrightarrow\infty}\frac{1}{n}\log{M}^{(n)}\geq R$. The operational definition of the {\it nonfeedback capacity} is $C^{OP}(\kappa)\triangleq \sup \{R\big| R \: \: \mbox{is achievable}\}\; \forall\; {\bf P}_{V_1} $, i.e., it does not depend on ${\bf P}_{V_1}$.

\subsection{Main Results of the Paper}
Throughout this paper, we consider the noise of Definition~\ref{def_nr_2}.

\begin{definition} 
%A  time-varying PO-SS realization of the Gaussian noise\\
 \label{def_nr_2}
A time-varying partially observable  state space (PO-SS)  realization of  the Gaussian noise $V^n$,   is defined by  
\begin{align}
&S_{t+1}=A_t S_{t}+ B_t W_t, \hso t=1, \ldots, n-1\label{real_1a}\\
&V_t= C_t S_{t} +  N_t W_t,  \hso t=1, \ldots, n, \label{real_1_ab}\\
% & \hst = H_t S_{t-1} + D_t W_t,    \hst H_t \tri C_t A_t, \hst D_t\tri C_t B_t+ N_t %\label{real_1a}  \\
 & S_1\in G(\mu_{S_1},K_{S_1}), \hso K_{S_1} \succeq 0, \\
&W_t\in G(0,K_{W_t}),\hso K_{W_t} \succ 0, \hso  t=1 \ldots, n, \\
& S_t : \Omega \rar {\mathbb R}^{n_s}, \  W_t : \Omega \rar {\mathbb R}^{n_w}, \  V_t : \Omega \rar {\mathbb R}^{n_y}, \\\ &R_t\tri N_t K_{W_t} N_t^T \succ 0,\hso t=1, \ldots, n \label{cp_e_ar2_s1_a_new}
%\\
%&  \left[ \begin{array}{cc} Q_t & S_t  \\ S_t & R_t\end{array} \right] \tri cov( \left[ \begin{array}{c} B_tW_t  \\ N_tW_t\end{array} \right], \left[ \begin{array}{c} B_tW_t  \\ N_tW_t\end{array} \right]^T) =  \left[ \begin{array}{cc} B_tK_{W_t}B_t^T & B_t K_{W_t}N_t^T  \\ N_t K_{W_t}B_t^T &N_tK_{W_t}N_t^T\end{array} \right]\ \succeq 0, \hso R_t\tri N_t K_{W_t} N_t^T \succ 0 \label{real_2}
\end{align}
where  $W_t,   t=1 \ldots, n$ is an independent Gaussian process,  independent of $S_1$, $n_y, n_s, n_w$ are arbitrary positive integers,  $(A_t, B_t, C_t, N_t, K_{S_1}, K_{W_t})$ are nonrandom, $X \in G(\mu_X, K_X)$ means $X$ is Gaussian distributed with mean $\mu_X$ and covariance  $K_X$ and for any matrix $Q \in {\mathbb R}^{n \times n}$. The notation $Q\succeq 0$ (resp. $Q \succ 0$) means $Q$ is symmetric positive semidefinite (resp. definite).
%A time-invariant PO-SS realization of the Gaussian noise $V^n \in N(0, K_{V^n})$ is defined by  (\ref{real_1a})-(\ref{cp_e_ar2_s1_a_new}), with $(A_{t}, B_{t}, C_t, N_t, K_{W_t})=(A, B, C, N,K_{W}), \forall t$. 
\end{definition}

{\it Converse Coding Theorem.} Suppose there exists a sequence of achievable nonfeedback codes with error probability ${\bf P}_{error}^{(n)}\rightarrow  0$, as $n \rightarrow \infty$. Then  $
R \leq \lim_{n \longrightarrow \infty}\frac{1}{n} C_n(\kappa, {\bf P}_{Y_1})$, 
where  $C_n(\kappa, {\bf P}_{Y_1})$ is  the sequential characterization of  the $n-$finite block length,  or transmission without feedback information ($n$-FTwFI) capacity formula  \cite[Section~I, III]{charalambous-kourtellaris-louka:2020a}, given as follows. 
\begin{align}
C_n(\kappa, {\bf P}_{Y_1}&)= \sup_{ \frac{1}{n}  {\bf E} \big\{\sum_{t=1}^{n}  ||X_t||_{{\mathbb R}^{n_x}}^2\big\} \leq \kappa    }\sum_{t=1}^n I(X_t, V^{t-1};Y_t| Y^{t-1})\label{seq_2_nfb}\\
=&\sup_{ \frac{1}{n}  {\bf E} \big\{\sum_{t=1}^{n}  ||X_t||_{{\mathbb R}^{n_x}}^2\big\} \leq \kappa    } H(Y^n)-H(V^n) \in [0,\infty] \label{seq_2_nfb_a}
\end{align}
where (\ref{seq_2_nfb_a}) follows from  the channel definition (\ref{g_cp_1}) (provided the probability density functions exist) and 
 the supremum is over ${\bf P}_{X_t|X^{t-1}}, t=1, \ldots, n$  induced by, 
% the jointly Gaussian  process $(V^t, X^t, Y^t)$,   such that ${\bf P}_{X_t|X^{t-1}, V^{t-1}}={\bf P}_{X_t|X^{t-1}},  t=1,\ldots, n$, provided $C_n^{nfb}(\kappa)$ is  finite.  
%Similar to  \cite{cover-pombra1989}, $X^n$ is,   
%\begin{align}
%&X_t=\overline{Z}_t, \hso t=1, \ldots, n, \hso  {\bf X}^n = {\bf \overline{Z}}^n, \hso {\bf Y}^n={\bf H}_n  \overline{\bf Z}^n  + {\bf V}^n,    \label{cp_6_nfb}   \\
%&{\bf X}^n \tri [X_1^T, X_2^T, \dots, X_n^T]^T,\; 
%%&  
%%{\bf X}^n = {\bf \overline{Z}}^n, \hso {\bf Y}^n={\bf H}_n  {\bf Z}^n  + {\bf V}^n, \label{cp_10_nfb}\\
%%&{\bf Y}^n=\Big({\bf B}_n +I_{n}\Big) {\bf V}^n + {\bf H}_n {\bf \overline{Z}}^n\\
% {\bf \overline{Z}}^n \in   G(0, K_{\bf {\overline{Z}}^n}) \; \mbox{indep. of} \; {\bf V}^n, \label{cp_8_nfb} \\
%&\frac{1}{n}  {\bf E} \Big\{\sum_{t=1}^{n}  ||X_t||_{{\mathbb R}^{n_x}}^2\Big\} =\frac{1}{n}tr \; \Big( {\bf E}\Big({\bf X}^n ({\bf X}^{n})^T\Big) \Big) \leq \kappa,  \label{cp_9_nfb} \\
%& C_n^{nfb}(\kappa) 
%= \sup_{K_{{\bf \overline{Z}}^n}:    \frac{1}{n}  tr\big(K_{\bf {\overline{Z}}^n}\big)  \leq \kappa }  \frac{1}{2} \log \frac{\det\big( K_{{\bf V}^n}+{\bf H}_n K_{{\bf \overline{Z}}^n}{\bf H}_n^T  \big) }{\det\big(K_{{\bf V}^n}\big)}. \label{cp_12_nfb}
%\end{align} 
\begin{align}
&X_t = \Lambda_t {\bf X}^{t-1} + Z_t,    \hso X_1=Z_1,  \label{ch_in_1} \\
&{\bf X}^n\tri \left[ \begin{array}{cccc} X_1^T &X_2^T &\ldots &X_n^T\end{array}\right]^T,  \\
&Z_t\in  G(0, K_{Z_t}), \; K_{Z_t}\succeq 0,  \; t=1, \ldots, n,   \mbox{ indep.  Gaussian}, \\
& Z_t \; \mbox{independent of} \;  (V^{t-1},X^{t-1},Y^{t-1},Z^{t-1}),  t=1, \ldots, n, \\
&{\Lambda}_t \in {\mathbb R}^{n_x\times (t-1)n_x}, \; \mbox{$ \forall t$  is  nonrandom}.\label{ch_in_2}
\end{align}
By recursive substitution of ${\bf X}^{t-1}$ into the right hand side of (\ref{ch_in_1}), we obtain  $X_t=\overline{Z}_t$, where $\overline{Z}_t \in G(0, K_{\overline{Z}_t}), t=1, \ldots, n$, is a correlated Gaussian noise, as given in  \cite{cover-pombra1989}.  However, for the purpose of our asymptotic analysis,  we prefer (\ref{ch_in_1})-(\ref{ch_in_2}), because it is much easier to analyse.  

We emphasize that the consideration of unstable noise $V^n$ implies $Y^n$ is also unstable and therefore for our asymptotic analysis, we need to use the two innovations processes of $V^n$ and  $Y^n$, as in \cite{charalambous2020new,louka-kourtellaris-charalambous:2020b,kourtellaris-charalambous-loyka:2020b,kourtellaris-charalambous-loyka:2020a}, giving rise to the following characterization of  ${C}_{n}(\kappa, {\bf P}_{Y_1})\in [0,\infty]$ given by, \\
\begin{align}
{C}_{n}(\kappa, {\bf P}_{Y_1})
= \sup_{(\Lambda_t, K_{\overline{Z}_t}), t=1, \ldots, n,  \frac{1}{n}  {\bf E} \big\{\sum_{t=1}^{n} ||X_t||_{ {\mathbb R}^{n_x}}^2\big\} \leq \kappa    }\sum_{t=1}^n \Big\{ H(I_t)-H(\hat{I}_t)\Big\} \label{ftfic_is_g_in_nfb}
\end{align}
where $I_t, \hat{I}_t$ are the innovations processes of $Y^n, V^n$, 
\begin{align}
I_t \tri Y_t- {\bf E}\big\{Y_t\Big|Y^{t-1}\big\}, \hst  \hat{I}_t \tri V_t- {\bf E}\big\{V_t\Big|V^{t-1}\big\}. \label{inn_intr_thm1_nfb}
\end{align}
Clearly, the convergence properties of  $\lim_{n \longrightarrow \infty}  \frac{1}{n}  C_n(\kappa,{\bf P}_{Y_1})$ and its independent of all ${\bf P}_{Y_1}$, are directly related to the convergence properties of $(I_t, \hat{I}_t, X_n), t=1,2, \ldots, n$, as $n \rightarrow \infty$.  

{\it State Space Realization of Channel Input.} For the analysis of the limit $\lim_{n \longrightarrow \infty}  \frac{1}{n}  C_n(\kappa,{\bf P}_{Y_1})$, we consider the alternative, state space realization of the Gaussian input $X^n$. 
\begin{align}
&\Xi_{t+1}=F_{t} \Xi_{t}+ G_{t} Z_t, \hso t=1, \ldots, n-1,\label{real_1aa}\\
&X_t= \Gamma_t \Xi_{t} +  D_t Z_t,  \hso t=1, \ldots, n, \label{real_1_abb}\\
% & \hst = H_t S_{t-1} + D_t W_t,    \hst H_t \tri C_t A_t, \hst D_t\tri C_t B_t+ N_t \label{real_1a}  \\
 & \Xi_1\in G(\mu_{\Xi_1},K_{\Xi_1}), \; K_{\Xi_1} \succeq 0, \\
&Z_t\in G(0,K_{Z_t}),\hso K_{Z_t} \succeq 0, \hso  t=1 \ldots, n,  \\
&\mbox{$Z^n$ indep. seq.}, \hso \mbox{$(\Xi_1, Z^n, W^n)$ mutually indep.}  \\
& \Xi_t : \Omega \rar {\mathbb R}^{n_\xi}, \  Z_t : \Omega \rar {\mathbb R}^{n_z}, \  X_t : \Omega \rar {\mathbb R}^{n_x}
%\\\ &R_t\tri N_t K_{W_t} N_t^T \succ 0 \label{cp_e_ar2_s1_a_new}
%\\
%&  \left[ \begin{array}{cc} Q_t & S_t  \\ S_t & R_t\end{array} \right] \tri cov( \left[ \begin{array}{c} B_tW_t  \\ N_tW_t\end{array} \right], \left[ \begin{array}{c} B_tW_t  \\ N_tW_t\end{array} \right]^T) =  \left[ \begin{array}{cc} B_tK_{W_t}B_t^T & B_t K_{W_t}N_t^T  \\ N_t K_{W_t}B_t^T &N_tK_{W_t}N_t^T\end{array} \right]\ \succeq 0, %\hso R_t\tri N_t K_{W_t} N_t^T \succ 0 \label{real_2} 
\label{real_1aaa}
\end{align}
where $n_\xi, n_z$ are arbitrary positive integers and $(F_{t}, G_{t}, \Gamma_t, D_t, K_{\Xi_1}, K_{Z_t})$ are nonrandom matrices $\forall t$. \\
Note that any finite-memory $AR$ input $X_t = \sum_{j=1}^M \Lambda_{t,j} X_{t-j} + Z_t$, with arbitrary large $M$, approximates (\ref{ch_in_1}), and this is a special case of (\ref{real_1aa})-(\ref{real_1aaa}). The performance of such inputs with $M=1$ discussed in \cite[Section~IV]{kourtellaris-charalambous-loyka:2020a} and \cite[Theorem~III.3]{louka-kourtellaris-charalambous:2020b} for IID input. 
%A time-invariant PO-SS realization of the Gaussian noise $V^n \in N(0, K_{V^n})$ is defined by  (\ref{real_1a})-(\ref{cp_e_ar2_s1_a_new}), with $(A_{t}, B_{t}, C_t, N_t, K_{W_t})=(A, B, C, N,K_{W}), \forall t$. 
{\it Asymptotic Characterization of Capacity.} The purpose of this paper is to analyze the two problems listed below.  \\
{\bf Problem \#1.} Identify conditions, such that asymptotic limit exists and does not depend on ${\bf P}_{Y_1}$, 
\vspace{-0.1cm}
\begin{align}
C^{o}(\kappa,{\bf P}_{Y_1})\tri   \lim_{n \longrightarrow \infty}  \frac{1}{n}  C_n(\kappa,{\bf P}_{Y_1})
=C(\kappa)\in [0,\infty), \hso \forall {\bf P}_{Y_1} \label{inter_change_limit_a}
\end{align}
and characterize $C(\kappa)$, which is independent of ${\bf P}_{Y_1}$.   
\begin{assumptions}
Considered for the asymptotic analysis are the two cases of noise and 
channel input realizations.  \\
{\bf Case 1:} Time-invariant, 
\begin{align}
&(A_n, B_n, C_n, N_n, K_{W_n})=(A, B, C, N, K_{W}),\; K_{W}\succ 0,  \label{tic_1}  \hso  \forall n \\ 
&(F_n,G_n,  \Gamma_n, D_n, K_{Z_n})=(F,G,   \Gamma, D, K_{Z}),  K_Z \succeq 0, \hso \forall n.  \label{tic_2}
\end{align}
{\bf Case 2:} Asymptotically time-invariant, 
\begin{align}
 &\lim_{n\longrightarrow \infty}(A_n, B_n, C_n, N_n, K_{W_n})=(A, B, C, N, K_{W}), \: K_W\succ 0, \label{atic_1} \\
 &  \lim_{n\longrightarrow \infty}(F_n,G_n, \Gamma_n, D_n, K_{Z_n})=(F,G,   \Gamma, D, K_{Z}), \: K_Z \succeq 0\label{atic_2}
 \end{align}
 where the limits are element wise.  For both Cases 1 and 2, the time-invariant realizations are assumed of  minimal dimensions.
 \end{assumptions}
For Cases 1 and 2,  we identify conditions on 
1) channel model matrices $(H, A, B, C, N, K_{W})$ and 
2) channel input matrices $(F,G,  K_{Z}, \Gamma, D), K_Z \succeq 0$, 
such that the limit in (\ref{inter_change_limit_a}) exists,  is independent of ${\bf P}_{Y_1}$ and is  characterized by 
\begin{align}
&C^{o}(\kappa,{\bf P}_{Y_1})= C(\kappa)  \tri   \sup \frac{1}{2} \ln \big\{\frac{ \det \big( {\bf C} \Pi {\bf C}^T + {\bf D} K_{\overline{W}}{\bf D}^T \big)    }{\det\big(C \Sigma C^T +N K_{W} N^T\big)}\big\}^+\label{ll_3_in}\\
&\mbox{the supremum is over} \;  (F, G, \Gamma, D, K_{Z}) \; \; \mbox{and} \nonumber\\
&K_Z\succeq 0, \;  tr(\Gamma P \Gamma^T +D K_Z D^T) \leq \kappa, \\
& \Pi \succeq 0, \; \Sigma \succeq 0 \; \mbox{satisfy matrix Algebraic Riccati Eqns}, \\
&P \succeq 0 \hso \mbox{satisfies  a matrix Lyapunov Equation}
 \label{ll_4}
 \end{align}
and where $\{\cdot\}^+\tri \max \{1, \cdot\}$, the $({\bf C}, {\bf D}, K_{\overline{W}})$ are specific matrices, related to the channel model and channel input matrices. 

{\bf Problem \#2.} Under the conditions of Problem \#1, we also show that the limit and the supremum can be interchanged and 
\begin{align}
C^{\infty}(\kappa, {\bf P}_{Y_1}) \tri &  \sup_{\lim_{n \longrightarrow \infty} \frac{1}{n}  {\bf E} \big\{\sum_{t=1}^{n} ||X_t||_{ {\mathbb R}^{n_x}}^2\big\} \leq \kappa    } \lim_{n \longrightarrow \infty} \frac{1}{n}\Big( \sum_{t=1}^n  H(I_t)-H(\hat{I}_t)\Big)      \label{inter_in}\\
 =& C^{o}(\kappa,{\bf P}_{Y_1})= C(\kappa)\in [0,\infty), \hso \forall {\bf P}_Y. 
 \end{align}
 
 {\it Direct Coding Theorem.}  By (\ref{ll_3_in}) and (\ref{inter_in}), the convergence is uniform over all ${\bf P}_{Y_1}$. Hence,  the asymptotic equipartition (AEP) and the information stability hold,  from which follows directly that $C(\kappa)$ is the nonfeedback capacity,  even for unstable channels, similar to the feedback capacity in \cite{kourtellaris-charalambousIT2015_Part_1,charalambous-kourtellaris-loykaIT2015_Part_2}.

{\it Notation.} \\
${\mathbb Z}_+ \tri \{1,2, \ldots\},    {\mathbb Z}_+^n \tri \{1,2, \ldots, n\}$, where $n$ is a finite positive integer. 
${\mathbb R}\tri (-\infty, \infty)$,  and ${\mathbb R}^m$ is the vector space of tuples
of the real numbers for an integer $m\in {\mathbb Z}_+$.  ${\mathbb R}^{n \times m}$ is the set of $n$ by $m$ matrices with entries from the set of real numbers for $(n,m)\in {\mathbb Z}_+\times {\mathbb Z}_+$. $I_{n} \in {\mathbb R}^{n\times n}, n\in {\mathbb Z}_{+}$ denotes the identity matrix, $tr\big(A\big)$ denotes the trace of any matrix $A \in {\mathbb R}^{n\times n}, n\in {\mathbb Z}_+$. \\
${\mathbb C} \tri \{a+j b: (a,b) \in {\mathbb R} \times {\mathbb R}\}$ is the  space of complex numbers. \\
${\mathbb D}_o \tri \big\{c \in {\mathbb C}: |c| <1\big\}$ is the open unit disc of the space of complex number ${\mathbb C}$. 
$spec(A) \subset {\mathbb C}$ is the spectrum of a matrix $A \in {\mathbb R}^{q \times q}, q \in {\mathbb Z}_+$ (the set of all its eigenvalues).
A  matrix $A \in {\mathbb R}^{q \times q}$ is called exponentially stable if all its eigenvalues are within the open unit disc, that is,  $spec(A) \subset {\mathbb D}_o$.\\
% Throughout the paper, we use the following notation.\\
%${\mathbb Z}\tri \{\ldots, -1,0,1,\ldots\}, {\mathbb Z}_+ \tri \{1,\ldots\},    {\mathbb Z}_+^n \tri \{1,2, \ldots, n\}$, where $n$ is a finite positive integer. \\
%${\mathbb R}\tri (-\infty, \infty)$,  and ${\mathbb R}^m$ is the vector space of tuples
%of the real numbers  for an integer  $n\in {\mathbb Z}_+$.\\
%${\mathbb R}^{n \times m}$ is the set of $n$ by $m$ matrices with entries from the set of real numbers for integers   $(n,m)\in {\mathbb Z}_+\times {\mathbb Z}_+$. \\
$X\in G(\mu_{X}, K_{X}), K_{X}\succeq 0$ denotes a Gaussian distributed RV $X$, with   mean  $\mu_{X}={\bf E}\{X\}$ and covariance $K_{X}\succeq 0$, $
K_X = cov(X,X) \tri {\bf E}\big\{\big(X-{\bf E}\big\{X\big\}\big) \big(X-{\bf E}\big\{X\big\}\big)^T \big\}$.
Given another Gaussian RV $Y: \Omega \rar {\mathbb R}^{n_y}$, which is jointly Gaussian distributed with $X$, i.e., with joint distribution ${\bf P}_{X,Y}$,  the conditional covariance of $X$ given $Y$ is (by properties of Gaussian RVs)
\begin{align}
K_{X|Y} = cov(X,X\Big|Y) \tri
% \tri  {\bf E}\big\{\big(X-{\bf E}\big\{X\Big|Y\big\}\big) \big(X-{\bf E}\big\{X\Big|Y\big\}\big)^T\big|Y \big\}\nonumber\\
%\sr{(a)}{=}&
{\bf E}\big\{\big(X-{\bf E}\big\{X\Big|Y\big\}\big) \big(X-{\bf E}\big\{X\Big|Y\big\}\big)^T \big\}.\nonumber
\end{align}
%where $(a)$ is due to a property of  jointly Gaussian  RVs.
%Given three arbitrary RVs $(X, Y, Z)$ with induced distribution ${\bf P}_{X, Y, Z}$, the RVs $(X, Z)$ are called conditionally independent given the RV $Y$ if  ${\bf P}_{Z|X,Y}={\bf P}_{Z|Y}$. This conditional independence is often denoted by,   $X \leftrightarrow Y \leftrightarrow Z$ is a Markov chain.
\section{Asymptotic Characterization of Capacity}
 \subsection{Sequential Characterizations of $n-$FTwFI Capacity}
  \label{sect:AGN}
We recall    the characterization of  $C_n(\kappa, {\bf P}_{Y_1})$ of  (\ref{ftfic_is_g_in_nfb}), which shows its dependence on two DREs and a  Lyapunov  equation.
 % using two matrix DREs. 
 %A lower bound is given in    \cite{kourtellaris-charalambous-loyka:2020a}.
\begin{theorem} \cite[Theorem III.2]{charalambous-kourtellaris-louka:2020a} Sequential  characterization of  ${C}_n(\kappa,{\bf P}_{Y_1})$.
\label{thm_SS_nfb}
  Consider the MIMO AGN channel (\ref{g_cp_1}), the noise of Definition~\ref{def_nr_2} and the input (\ref{real_1aa})-(\ref{real_1aaa}).
  %  and the calculations of  Lemma~\ref{lemma_POSS}. 
Define
\begin{align*}
&\Theta_t \tri   \left(\begin{array}{cc} \Xi_t^T & S_{t}^T\end{array}\right)^T, \hso  \overline{W}_t \tri   \left(\begin{array}{cc} Z_t^T & W_{t}^T\end{array}\right)^T, \hso   \overline{W}_t \in G(0,K_{\overline{W}_t}) , \\
&\Pi_{t} \tri  cov\big(\Theta_t,\Theta_t\Big|Y^{t-1}\big)= {\bf E}\Big\{\Big(\Theta- \widehat{\Theta}_{t}\Big)\Big(\Theta_t-\widehat{\Theta}_{t} \Big)^T\Big\}, \\ 
& \widehat{\Theta}_{t} \tri {\bf E}\Big\{\Theta_t\Big|Y^{t-1}\Big\},  \;  t=2, \ldots, n, \; 
\widehat{\Theta}_1 \tri \mu_{\Theta_1}, \; \Pi_1 \tri K_{\Theta_1} , \\
&P_t\tri cov\big(\Xi_t,\Xi_t)= {\bf E}\Big\{\Big(\Xi_t - {\bf E}\Big\{\Xi_t\Big\}\Big)\Big(\Xi_t - {\bf E}\Big\{\Xi_t\Big\}\Big)^T\Big\}.
\end{align*}
(i) The joint Gaussian process $(X^n,Y^n, V^n)$ is represented by 
\begin{align}
&\Theta_{t+1}= {\bf A}_t \Theta_t + {\bf B}_t \overline{W}_t,  \hso t=1, \ldots, n-1,\\
&Y_t={\bf C}_t \Theta_t  +  {\bf D}_t \overline{W}_t, \hso t=1, \ldots, n
\\
&{\bf A}_t \tri \left( \begin{array}{cc}  F_t & 0 \\ 0 & A_t \end{array} \right), \hso {\bf B}_t \tri \left( \begin{array}{cc}  G_t & 0 \\ 0 & B_t \end{array} \right), \\
& {\bf C}_t \tri \left( \begin{array}{cc}  H_t \Gamma_t & C_t \end{array} \right), \hso {\bf D}_t \tri \left( \begin{array}{cc}  H_t D_t  & N_t \end{array} \right).
\end{align}
where ${\bf A}_t, {\bf B}_t, {\bf C}_t, {\bf D}_t$ are appropriate matrices.\\
(ii) The covariance of the error, $\Pi_t \tri {\bf E} \big\{\widehat{E}_t \widehat{E}_t^T\big\}$,$\widehat{E}_t =\Theta_t-\widehat{\Theta}_t$,  satisfies the generalized matrix DRE 
\begin{align}
&\Pi_{t+1}= {\bf A}_{t} \Pi_{t}{\bf A}_{t}^T  + {\bf B}_{t}K_{\overline{W}_{t}}{\bf B}_{t}^T -\Big({\bf A}_{t}  \Pi_{t}{\bf C}_{t}^T+{\bf B}_{t}K_{\overline{W}_{t}}{\bf D}_{t}^T  \Big) \nonumber \\
& \hspace{0.1cm} .\Big({\bf D}_{t} K_{\overline{W}_{t}} {\bf D}_{t}^T+{\bf C}_{t}  \Pi_{t} {\bf C}_{t}^T\Big)^{-1}\Big( {\bf A}_{t}  \Pi_{t}{\bf C}_{t}^T+ {\bf B}_{t} K_{\overline{W}_{t}}{\bf D}_{t}^T  \Big)^T,\nonumber\\   & \hspace{0.1cm} \hso \Pi_t \succeq 0,  \hso \Pi_{1}=K_{\Theta_1}\succeq 0, \hso t=1, \ldots, n.\label{DREy}
\end{align}
Moreover, the error $\widehat{E}_t \tri {\Theta}_t - \widehat{\Theta}_t$ satisfies the recursion
\begin{align}
&\widehat{E}_{t+1} = {\bf F}^{CL}_t (\Pi_t)\widehat{E}_t + \big({\bf B}_t - {\bf F}_t(\Pi_t){\bf D}_t\big) \overline{W}_t,\; t = 1,\dots, n, \label{error2}\\
& {\bf F}^{CL}_t(\Pi_t) =  {\bf A}_t - {\bf F}_t(\Pi_t){\bf C}_t, \\
&{\bf F}_t(\Pi_t) = \big ({\bf A}_t\Pi_t{\bf C}_t^T +{\bf B}_t K_{\overline{W}_t}{\bf D}_t^T \big) \big({\bf D}_t K_{\overline{W}_t} {\bf D}_t^T + {\bf C}_t \Pi_t {\bf C}_t^T \big)^{-1}.\nonumber 
\end{align}
(iii) The innovations process $I_t$  of $Y^n$ for $t=1, \ldots, n$, is 
\begin{align}
&I_t \tri  Y_t -{\bf E} \Big\{Y_t\Big|Y^{t-1}\Big\}= {\bf C}_t \big(\Theta_t-\widehat{\Theta}_{t}\big)+ {\bf D}_t \overline{W}_t, \label{kf_m_2_nfb_a} \\
& I_t\in G(0, K_{I_t}), \hso K_{I_t} ={\bf C}_t \Pi_t {\bf C}_t^T + {\bf D}_t K_{\overline{W}_t}{\bf D}_t^T.
\end{align}
(iv) The matrix $P_t =  cov\big(\Xi_t,\Xi_t)$ satisfies Lyapunov recursion,  
\begin{align}
&P_{t+1} = F_t P_t F_t^T + G_t K_{Z_t}G_t^T, \hso P_t\succeq 0, \hso P_1=K_{\Xi_1} \label{lyapunov}.
\end{align}
(v) The average power constraint is 
\begin{align}
&\frac{1}{n}  {\bf E} \Big\{\sum_{t=1}^{n} ||X_t||_{{\mathbb R}^{n_x}}^2\Big\}= \frac{1}{n} \sum_{t=1}^{n}  tr \Big(\Gamma_t P_t \Gamma_t^T +D_t K_{Z_t}D_t^T\Big)   \label{cp_9_al_n_nfb_a}.
\end{align}
(vii) The entropy of $Y^n$ is $H(Y^n)=\sum_{t=1}^n H({I}_t)$, is given by 
\begin{align}
H(Y^n)
=   \frac{1}{2}\sum_{t=1}^n \ln \big( (2\pi e)^{n_y} \det\big({\bf C}_t \Pi_t {\bf C}_t^T + {\bf D}_t K_{\overline{W}_t}{\bf D}_t^T\big) \big) \label{entr_output}
\end{align}
and the entropy of $V^n$ is $H(V^n)=\sum_{t=1}^n H(\hat{I}_t)$,  is given by 
\begin{align}
&H(V^n)
=   \frac{1}{2}\sum_{t=1}^n \ln \big( (2\pi e)^{n_y} \det\big(C_t \Sigma_{t} C_t^T + N_t K_{W_t} N_t^T\big) \big) \label{entr_noise}\\
& \hat{I}_t \in G(0, K_{\hat{I}_t}) \hso  \mbox{an orth. innov. proc.  indep. of  $V^{t-1}$},  \label{inn_po_2}\\
&K_{\hat{I}_t} \tri  cov(\hat{I}_t, \hat{I}_t)= C_t \Sigma_t C_t^T +N_t K_{W_t} N_t^T=K_{V_t|V^{t-1}}. \label{cov_in_noise}
\end{align}
 where  $\Sigma_t $ satisfies the generalized matrix DRE 
\begin{align}
\Sigma_{t+1}= &A_t \Sigma_{t}A_t^T  + B_tK_{W_t}B_t^T -\Big(A_t \Sigma_{t}C_t^T+B_tK_{W_{t}}N^T  \Big) \nonumber \\
& \hspace{0.1cm} . \Big(N_t K_{W_t} N_t^T+C_t  \Sigma_{t} C_t^T\Big)^{-1}\Big( A_{t}  \Sigma_{t}C_t^T+ B_t K_{W_t}N_t^T  \Big)^T,\nonumber\\   & \hspace{0.1cm}\hso \Sigma_t \succeq 0, \hso  t=1, \ldots, n, \hso \Sigma_{1}=K_{S_1}\succeq 0. \label{dre_1}
\end{align}
%and 
%%the innovations process and average power are
%\begin{align}
%&I_t^{nfb} \tri  Y_t -{\bf E} \Big\{Y_t\Big|Y^{t-1}\Big\}=Y_t-{\bf C}_t\widehat{\Theta}_{t},\\
%&\hst = {\bf C}_t \Big(\Theta_t-\widehat{\Theta}_{t}\Big)+ {\bf D}_t \overline{W}_t, \hso t=1, \ldots, n, \label{kf_m_2_nfb_a} \\
%& I_t^{nfb}\in G(0, K_{I_t^{nfb}}), \hso K_{I_t^{nfb}} ={\bf C}_t \Pi_t {\bf C}_t^T + {\bf D}_t K_{\overline{W}_t}{\bf D}_t^T, \\
%&\frac{1}{n}  {\bf E} \Big\{\sum_{t=1}^{n} ||X_t||_{{\mathbb R}^{n_x}}^2\Big\}= \frac{1}{n} \sum_{t=1}^{n}  tr \Big(\Gamma_t P_t \Gamma_t^T +D_t K_{Z_t}D_t^T\Big)   \label{cp_9_al_n_nfb_a}\\
%&P_{t+1} = F_t P_t F_t^T + G_t K_{Z_t}G_t^T, \hso P_t\succeq 0, \hso P_1=K_{\Xi_1}, \\
%&\Pi_{t+1}= {\bf A}_{t} \Pi_{t}{\bf A}_{t}^T  + {\bf B}_{t}K_{\overline{W}_{t}}{\bf B}_{t}^T -\Big({\bf A}_{t}  \Pi_{t}{\bf C}_{t}^T+{\bf B}_{t}K_{\overline{W}_{t}}{\bf D}_{t}^T  \Big) \nonumber \\
%& \hspace{0.1cm} . \Big({\bf D}_{t} K_{\overline{W}_{t}} {\bf D}_{t}^T+{\bf C}_{t}  \Pi_{t} {\bf C}_{t}^T\Big)^{-1}\Big( {\bf A}_{t}  \Pi_{t}{\bf C}_{t}^T+ {\bf B}_{t} K_{\overline{W}_{t}}{\bf D}_{t}^T  \Big)^T,\nonumber\\   & \hspace{0.1cm} \hso \Pi_t \succeq 0,  \hso \Pi_{1}=K_{\Theta_1}\succeq 0, \hso t=1, \ldots, n.
%\end{align}
(v) An equivalent characterization of ${C}_n(\kappa, {\bf P}_{Y_1})$ is 
\begin{align}
 &{C}_n(\kappa, {\bf P}_{Y_1})
=  \sup_{ \frac{1}{n}  {\bf E}\big\{\sum_{t=1}^n  ||X_t||_{{\mathbb R}^{n_x}}^2\big\}\leq \kappa } \frac{1}{2} 
\sum_{t=1}^n \ln \Big\{ \frac{ \det( K_{I_t})   }{\det(K_{\hat{I}_t})}\Big\}^+  
%\\
%=& \sup_{  \big(\Lambda_{t}, K_{Z_t}\big), t=1, \ldots, n: \hso  \frac{1}{n}\sum_{t=1}^n\big( \Lambda_t K_t \Lambda_t^T+K_{Z_t} \big)    \leq \kappa }\label{cp_12_alt_1_new} \\ &\frac{1}{2} 
%\sum_{t=1}^n \ln \Big\{ \frac{ \det \Big( \big(H_t \Lambda_t +C_t\big)K_t \big(H_t \Lambda_t +C_t\big)^T + K_{\hat{I}_t} + H_tK_{Z_t}H_t^T \Big)    }{\det\Big(K_{\hat{I}_t}\Big)}\Big\}^+. \nonumber 
%&\hat{I}_t \tri V_t -{\bf E}\Big\{V_t\Big|V^{t-1}\Big\} =C_t\big(S_{t}- \hat{S}_{t}\big) + N_t W_t, \label{inn_po_1}
\end{align}
  where the supremum is over $(F_t, G_t, \Gamma_t, D_t, K_{Z_t}), t=1, \ldots, n$. 
 \end{theorem}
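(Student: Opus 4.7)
The plan is to obtain each assertion as a direct consequence of stacking the two state-space realizations and then invoking the Kalman filter, first on the joint system and then on the noise subsystem alone. For part (i), I would set $\Theta_t \tri (\Xi_t^T, S_t^T)^T$ and $\overline{W}_t \tri (Z_t^T, W_t^T)^T$; by the mutual independence of $(\Xi_1, Z^n, W^n)$, one has $\overline{W}_t \in G(0, K_{\overline{W}_t})$ with $K_{\overline{W}_t}$ block-diagonal in $(K_{Z_t}, K_{W_t})$. Stacking (\ref{real_1aa}) on top of (\ref{real_1a}) yields $\Theta_{t+1}={\bf A}_t\Theta_t+{\bf B}_t\overline{W}_t$ with the block-diagonal $({\bf A}_t,{\bf B}_t)$ displayed in the statement. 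Substituting $X_t = \Gamma_t \Xi_t + D_t Z_t$ from (\ref{real_1_abb}) and $V_t = C_t S_t + N_t W_t$ from (\ref{real_1_ab}) into the channel equation (\ref{g_cp_1}) produces $Y_t={\bf C}_t\Theta_t+{\bf D}_t\overline{W}_t$ with the stated $({\bf C}_t,{\bf D}_t)$.

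For parts (ii) and (iii), I would apply the Kalman filter to the joint realization obtained in (i). Since $\overline{W}_t$ appears in both the state and the measurement equations, the process and measurement noises are correlated with cross-covariance ${\bf B}_t K_{\overline{W}_t} {\bf D}_t^T$; the corresponding correlated-noise Kalman recursion directly produces the generalized DRE (\ref{DREy}), the closed-loop error recursion (\ref{error2}), the filter gain ${\bf F}_t(\Pi_t)$, and the innovations representation $I_t = {\bf C}_t\widehat{E}_t + {\bf D}_t\overline{W}_t$ of (\ref{kf_m_2_nfb_a}). Orthogonality of the innovations to $Y^{t-1}$, together with the joint Gaussianity of $(X^n,Y^n,V^n)$, then yields $I_t \in G(0,K_{I_t})$ with $K_{I_t} = {\bf C}_t\Pi_t{\bf C}_t^T + {\bf D}_t K_{\overline{W}_t}{\bf D}_t^T$. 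Well-posedness of the DRE (i.e.\ invertibility of the innovations covariance in (\ref{DREy})) follows from assumption (\ref{cp_e_ar2_s1_a_new}), which guarantees $\mathbf{D}_t K_{\overline{W}_t}\mathbf{D}_t^T \succeq N_t K_{W_t} N_t^T \succ 0$.

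Part (iv) is immediate: taking the covariance of (\ref{real_1aa}) and using $Z_t$ independent of $\Xi_t$ gives $P_{t+1}= F_t P_t F_t^T + G_t K_{Z_t} G_t^T$ with $P_1=K_{\Xi_1}$. Part (v) then follows by taking the trace of the covariance of $X_t=\Gamma_t\Xi_t+D_t Z_t$ and averaging in $t$. For part (vii), the chain rule together with the identity $H(Y_t|Y^{t-1})=H(I_t|Y^{t-1})$ (since $\mathbf{E}\{Y_t|Y^{t-1}\}$ is $Y^{t-1}$-measurable) and the Gaussian independence of $I_t$ from $Y^{t-1}$ yields $H(Y^n)=\sum_t H(I_t)$; the closed-form expression (\ref{entr_output}) is then the standard Gaussian entropy formula. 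Applying the very same argument to the noise subsystem (\ref{real_1a})-(\ref{real_1_ab}) alone --- which is the special case of (i) with the input branch absent --- produces the DRE (\ref{dre_1}) for $\Sigma_t = cov(S_t,S_t|V^{t-1})$, the innovations covariance (\ref{cov_in_noise}), and the entropy (\ref{entr_noise}). Finally, inserting the two entropies into (\ref{seq_2_nfb_a}) and taking the supremum over $(F_t,G_t,\Gamma_t,D_t,K_{Z_t})$ --- which, by suitable lifting of the dimension $n_\xi$, parameterizes every Gaussian input of the form (\ref{ch_in_1})-(\ref{ch_in_2}) --- yields the log-det ratio in part (v); the $\{\cdot\}^+$ operator encodes the nonnegativity $C_n(\kappa,{\bf P}_{Y_1})\geq 0$, restricting attention to inputs for which $\det K_{I_t} \geq \det K_{\hat I_t}$.

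The principal technical obstacle is the careful treatment of the correlated-noise Kalman recursion in (ii)-(iii): verifying that the innovations remain orthogonal to $Y^{t-1}$ when the process and observation noises share the common factor $\overline{W}_t$, and that the inverse in (\ref{DREy}) exists for every $t=1,\ldots,n$. A secondary subtlety is the input-parameterization step at the end, where one must check that the class (\ref{real_1aa})-(\ref{real_1aaa}) is rich enough --- after embedding $\Xi_t$ in a sufficiently large state space --- to realize every admissible $(\Lambda_t,K_{\overline{Z}_t})$ in (\ref{ch_in_1})-(\ref{ch_in_2}), so that no loss of optimality is incurred in restricting the supremum.
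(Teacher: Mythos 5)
Your proposal is correct and follows essentially the same route as the source of this result: the paper itself states Theorem~\ref{thm_SS_nfb} without proof, citing \cite[Theorem~III.2]{charalambous-kourtellaris-louka:2020a}, whose derivation is exactly your augmented-state construction $\Theta_t=(\Xi_t^T,S_t^T)^T$, the correlated-noise Kalman filter (yielding the generalized DREs (\ref{DREy}), (\ref{dre_1}) and the innovations representations), the Lyapunov recursion for the power, and the chain rule $H(Y^n)=\sum_t H(I_t)$, $H(V^n)=\sum_t H(\hat I_t)$ inserted into (\ref{seq_2_nfb_a}). Your verification that ${\bf D}_t K_{\overline W_t}{\bf D}_t^T\succeq N_tK_{W_t}N_t^T\succ 0$ (via the block-diagonal $K_{\overline W_t}$ from the mutual independence of $(\Xi_1,Z^n,W^n)$) and your remark on lifting $n_\xi$ to recover the full class (\ref{ch_in_1})--(\ref{ch_in_2}) for finite $n$ address the only nontrivial points, so no gap remains.
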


\label{sect_POSS}

\subsection{Asymptotic Analysis}
\label{sect:q1}
To address the limit $\lim_{n \longrightarrow \infty}\frac{1}{n} C_{n}(\kappa, {\bf P}_{V_1})$,  under Case 1 and Case 2,  we investigate the convergence properties of generalized matrix DREs (\ref{DREy}), (\ref{dre_1}) and Lyapunov matrix difference equation (\ref{lyapunov}) to their limits.  Such properties are  summarized in \cite[Theorem A.1]{charalambous-kourtellaris-loykaIT2015_Part_2} and \cite[Theorem III.2]{charalambous2020new}.

We present sufficient conditions for convergence in Corollary~\ref{cor_POSS_IH},  Theorem~\ref{thm_fc_IH}, Theorem~\ref{lemma_vanschuppen}, irrespectively of whether the noise is stable, or unstable.  

\begin{corollary}
\label{cor_POSS_IH}
Consider  Case 1 or Case 2 \\
Let  $\Sigma_t, t=1, 2, \ldots $ denote the solution of the matrix DRE (\ref{dre_1}). \\
% of the generalized Kalman-filter of Lemma~\ref{lemma_POSS}  of the  time-invariant PO-SS realization of $V^n$ of  Definition~\ref{def_nr_2},  i.e.,  $(A_t, B_t, C_t, N_t, K_{W_t})=(A, B, C, N, K_{W}), \forall t$, generated by 
%\begin{align}
%&\Sigma_{t+1}^o= A \Sigma_{t}^oA^T  + B K_{W}B^T -\Big(A  \Sigma_{t}^oC^T+BK_{W}N^T  \Big) \nonumber \\
%&\hspace{0.1cm} .\Big(N K_{W} N^T+C  \Sigma_{t}^o C^T\Big)^{-1} \Big( A  \Sigma_{t}^oC^T+ B K_{W}N^T  \Big)^T, \nonumber\\   & \hspace{0.1cm}\hso \Sigma_t^o \succeq 0,    \hso t=1, \ldots, n, \hso \Sigma_{1}^o=K_{S_1}\succeq 0. \label{dre_1_TI}\\
%&M^{CL}(\Sigma^o)\tri  A- M(\Sigma^o)C, \\ & M(\Sigma^o) \tri \Big( A  \Sigma^o C^T+B_{t} K_{W}N^T\Big)\Big(N K_{W}N^T+ C \Sigma^o C^T \Big)^{-1}. \label{dre_1_TI_a}
%\end{align}
Let $\Sigma=\Sigma^{T} \succeq 0
$ be a solution of the corresponding ARE  
\begin{align}
\Sigma= &A \Sigma A^T  + B K_{W}B^T -\Big(A \Sigma C^T+B K_{W}N^T  \Big) \nonumber \\&\hspace{0.1cm} . \Big(N K_{W} N^T+C  \Sigma C^T\Big)^{-1} \Big( A  \Sigma C^T+ B K_{W}N^T  \Big)^T.  \label{dre_1_SS}
\end{align}
Define the matrices 
\begin{align}
%&GQG^T \tri B K_W B^T, \hso G S\tri B K_W N^T,  \hso  R\tri N K_W N^T \nonumber \\&\hspace{0.1cm} \Longrightarrow  \hso G \tri B, \hso Q\tri K_W, \hso  S\tri K_W N^T, \label{map_1}\\
&A^*\tri A- B K_W N^T \big(N K_W N^T\big)^{-1} C, \hso G\tri B, \nonumber \\& B^* \tri K_W- K_W N^T \Big(N K_W N^T\Big)^{-1} \Big(K_W N^T\Big)^T. \label{noise_st}
\end{align}
%(a) All statements of Theorem~\ref{thm_ric} hold with $(G, Q,S, R)$ as defined by (\ref{map_1}).\\
Suppose (see \cite{kailath-sayed-hassibi,caines1988,vanschuppen2021} for definitions)
\begin{align}
&\mbox{$ \{A, C\}$ is detectable,  and $\{A^*, G B^{*,\frac{1}{2}}\}$ is stabilizable.} \label{st}
\end{align}
Any solution $\Sigma_{t}, t=1, 2, \ldots,n$ to the generalized matrix DRE (\ref{dre_1})  with arbitrary  initial condition $\Sigma_{1}\succeq 0$, is such that $\lim_{n \longrightarrow \infty} \Sigma_{n} =\Sigma$, where $\Sigma\succeq  0$ is the unique solution of the generalized matrix ARE (\ref{dre_1_SS}) with $spec\big(M^{CL}(\Sigma)\big) \in {\mathbb D}_o$.   
%(b) The entropy  rate  of $V^n$ is given by 
%\begin{align}
%&H_R(V^\infty)\tri \lim_{n \longrightarrow \infty} \frac{1}{n}H(V^n)= H(\hat{I}_t^\infty), \hso    \forall t \nonumber \\&=\frac{1}{2}\ln \Big((2\pi e)^{n_y} \det\big(C \Sigma C^T + N K_{W} N^T\big) \Big), \;  \forall \Sigma_1 \succeq 0,   \label{ti_inno_11}
%\end{align}
%where $\hat{I}_t^\infty$
%%\begin{align}
%%\hat{I}_t^\infty \tri C\big(S_t-\hat{S}_t^\infty\big)+ N W_t \in G(0, C \Sigma C^T + N K_{W} N^T),\hso  t=1,2, \ldots,\label{ti_inno_1}
%%\end{align}
% is the stationary Gaussian innovations process, i.e.,  with  $\Sigma_t$ replaced by $\Sigma$, and  $H_R(V^\infty)$ is  
% %  limit $\lim_{n \longrightarrow \infty} \Sigma_n^o= \Sigma^\infty$
%   independent of the initial data $\Sigma_1\succeq 0$. 
\end{corollary}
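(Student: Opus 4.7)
The strategy is to reduce the generalized matrix DRE (\ref{dre_1}), which contains the cross-coupling term $B_t K_{W_t} N_t^T$ between process and measurement noise, to a standard DRE of Kalman filtering form, so that the classical convergence theorems for Riccati equations apply directly. The reduction is the standard ``completing the square'' trick: write
\begin{equation*}
A \Sigma A^T + B K_W B^T - \bigl(A\Sigma C^T + B K_W N^T\bigr)\bigl(N K_W N^T + C\Sigma C^T\bigr)^{-1}\bigl(A\Sigma C^T + B K_W N^T\bigr)^T
\end{equation*}
in terms of the modified drift $A^{*} = A - B K_W N^T (N K_W N^T)^{-1} C$ and the Schur complement $B^{*} = K_W - K_W N^T (N K_W N^T)^{-1} N K_W$ from (\ref{noise_st}). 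After this reduction the generalized ARE (\ref{dre_1_SS}) has exactly the form of a standard ARE with pair $(A^*, C)$ in the measurement channel and effective process noise covariance $G B^{*} G^T$.

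The second step is to invoke the classical convergence theorem for the standard DRE (Caines~\cite{caines1988}, Kailath--Sayed--Hassibi~\cite{kailath-sayed-hassibi}, or Van~Schuppen~\cite{vanschuppen2021}), which states: if $(A,C)$ is detectable and $(A^{*}, G B^{*,1/2})$ is stabilizable, then (i) the reduced ARE admits a unique solution $\Sigma \succeq 0$ for which the closed-loop matrix $M^{CL}(\Sigma)$ is exponentially stable, i.e.\ $\mathrm{spec}(M^{CL}(\Sigma))\subset \mathbb{D}_o$, and (ii) for every initial condition $\Sigma_1 \succeq 0$, the DRE iterates satisfy $\Sigma_n \to \Sigma$ as $n\to\infty$. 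Uniqueness and attractivity of $\Sigma$ follow from the detectability/stabilizability pair via the standard argument that an exponentially stable closed loop rules out other psd equilibria, and existence follows because the monotone subsequence defined by the DRE initialized at $\Sigma_1 = 0$ increases to a bounded psd limit that is necessarily the stabilizing solution. Translating this back through the change of variables gives the stated convergence to the unique stabilizing $\Sigma$ of (\ref{dre_1_SS}).

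For Case~2 (asymptotically time-invariant), I would argue by a continuity/perturbation argument. Writing the time-varying DRE as $\Sigma_{t+1} = \mathcal{R}_t(\Sigma_t)$, where the Riccati map $\mathcal{R}_t$ depends continuously on the coefficient tuple $(A_t,B_t,C_t,N_t,K_{W_t})$, the assumption (\ref{atic_1}) implies $\mathcal{R}_t \to \mathcal{R}$ uniformly on compact sets of psd matrices. Since $\mathcal{R}$ is a contraction toward $\Sigma$ in the appropriate sense on the basin of attraction (as guaranteed by the stability of $M^{CL}(\Sigma)$), one can then absorb the vanishing perturbation: given $\varepsilon>0$, pick $N$ so that both $\|\mathcal{R}_t - \mathcal{R}\|$ is tiny for $t\geq N$ and so that the autonomous iterate has brought $\Sigma_N$ inside a small neighborhood of $\Sigma$, and then propagate forward using local Lipschitz continuity of $\mathcal{R}$ together with its contracting behavior near $\Sigma$.

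The main obstacle I anticipate is not the time-invariant Case~1 (which is a direct citation of a well-known theorem once the cross-term reduction is done) but rather a clean statement of the perturbation step in Case~2 together with the verification that the minimality assumption plus (\ref{atic_1}) really does preserve the detectability/stabilizability conditions in the limit. A subtler issue is that the DRE need not be globally contractive, only locally near the stabilizing fixed point; one must therefore first establish that the time-varying iterates eventually enter the domain of attraction of $\Sigma$ before the perturbation-contraction argument can be closed. This is typically handled by the monotonicity and boundedness properties of the Riccati recursion under detectability, which ensure a uniform-in-$t$ bound on $\Sigma_t$ and thus allow restriction to a compact set on which $\mathcal{R}_t \to \mathcal{R}$ uniformly.
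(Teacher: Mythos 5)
Your proposal is correct and follows essentially the same route as the paper: for Case 1 it reduces to the classical detectability/stabilizability convergence theorem for the (generalized, cross-coupled) Riccati recursion as in \cite{kailath-sayed-hassibi,caines1988,vanschuppen2021}, and for Case 2 it uses continuity of the Riccati iteration with respect to its coefficients, which is exactly the paper's two-line argument. The only difference is that you spell out the completing-the-square reduction and the perturbation/basin-of-attraction details that the paper delegates to the cited references.
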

\vspace*{-.3cm}
\begin{proof} For Case 1, the convergence of $\Sigma_n, n=1,2, \ldots$, follows from the detectability and stabilizability conditions.  For Case 2, the statements of convergence of $\Sigma_n, n=1,2, \ldots$  hold, due to continuity property of solutions of generalized difference Riccati equations, with respect to its coefficients. 
%The entropy rate converges due to convergence of $\Sigma_n, n=1,2, \ldots$ and the Cesaro limit theorem. The last statements follows from  properties of Kalman-filter.
\end{proof}
\vspace{-0.5cm}
\begin{theorem} 
\label{thm_fc_IH}
Consider  Case 1 or Case 2.  \\
Let  $\Pi_t, t=1, \ldots, $ denote the solution of the DRE (\ref{DREy}). \\
%\begin{align}
%&\Pi^o_{t+1}= {\bf A} \Pi_{t}{\bf A}^T  + {\bf B}K_{\overline{W}}{\bf B}^T -\Big({\bf A}  \Pi^o_{t}{\bf C}^T+{\bf B}K_{\overline{W}}{\bf D}^T  \Big) \nonumber \\
%& \hspace{0.1cm} . \Big({\bf D} K_{\overline{W}} {\bf D}^T+{\bf C}  \Pi^o_{t} {\bf C}^T\Big)^{-1}\Big( {\bf A}  \Pi^o_{t}{\bf C}^T+ {\bf B} K_{\overline{W}}{\bf D}^T  \Big)^T, \nonumber \\&\hspace{0.1cm} \Pi^o_t \succeq 0,  \hso \Pi^o_{1}=K_{\Theta_1}\succeq 0, \hso t=1, \ldots, n \label{kf_m_4_a_TI}\\
%& F^{CL}(\Pi^o) =  {\bf A} - F(\Pi^o){\bf C}, \nonumber \\& F(\Pi^o) = \big ({\bf A}\Pi^o{\bf C}^T +{\bf B} K_{\overline{W}}{\bf D}^T \big) \big({\bf D} K_{\overline{W}} {\bf D}^T + {\bf C} \Pi^o {\bf C}^T \big)^{-1}.
%\end{align}
Let $\Pi = \Pi^{T} \succeq 0$ be a solution of the corresponding ARE 
\begin{align}
&\Pi= {\bf A} \Pi{\bf A}^T  + {\bf B}K_{\overline{W}}{\bf B}^T -\Big({\bf A}  \Pi{\bf C}^T+{\bf B}K_{\overline{W}}{\bf D}^T  \Big) \nonumber \\
& \hspace{0.1cm} . \Big({\bf D} K_{\overline{W}} {\bf D}^T+{\bf C}  \Pi {\bf C}^T\Big)^{-1}\Big( {\bf A}  \Pi {\bf C}^T+ {\bf B} K_{\overline{W}}{\bf D}^T  \Big)^T.\label{kf_m_4_a_TI_ARE}
\end{align}
Define the matrices \cite{kailath-sayed-hassibi,caines1988,vanschuppen2021}
\begin{align}
%&{\bf GQG}^T \tri {\bf B} K_{\overline W} {\bf B}^T, \hso {\bf G S} \tri{\bf  B} K_{\overline W} {\bf D}^T,  \hso  {\bf R} \tri {\bf D} K_{\overline W} {\bf D}^T \nonumber \\&\hspace{0.1cm} \Longrightarrow  \hso {\bf G} \tri {\bf B}, \hso {\bf Q} \tri K_{\overline W}, \hso  {\bf S} \tri K_{\overline W}  {\bf D}^T, \label{map_1_ih}\\
&{\bf A}^*\tri {\bf A} - {\bf B} K_{\overline W} {\bf D}^T \big({\bf D} K_{\overline W} {\bf D}^T\big)^{-1} {\bf C}, \hso {\bf G}\tri  {\bf B}, \nonumber \\& {\bf B}^* \tri K_{\overline W}- K_{\overline W} {\bf D}^T \Big({\bf D} K_{\overline W} {\bf D}^T\Big)^{-1} \Big(K_{\overline W} {\bf D}^T\Big)^T. \label{input_st}
\end{align}
Suppose \cite{kailath-sayed-hassibi,caines1988,vanschuppen2021}
% the detectability and stabilizability conditions of Corollary~\ref{cor_POSS_IH}.(i) and (ii) hold. \\
%Then all statements of Theorem~\ref{thm_ric} hold with $({\bf G}, {\bf Q},{\bf S}, {\bf R})$ as defined by (\ref{map_1_ih})-(\ref{input_st}).\\
%In particular, suppose  
\begin{align}
&\mbox{$\{{\bf A}, {\bf C}\}$ is detectable and $\{{\bf A}^*, {\bf G} {\bf B}^{*,\frac{1}{2}}\}$ is stabilizable.} \label{kt}
\end{align}
Any solution $\Pi_{t}, t=1, 2, \ldots,n$ to the generalized matrix DRE (\ref{DREy}) with arbitrary initial condition $\Pi_{1}\succeq 0$, is such that $\lim_{n \longrightarrow \infty} \Pi_{n}=\Pi$, where $\Pi\succeq  0$ is the unique solution of the generalized matrix ARE (\ref{kf_m_4_a_TI_ARE}), with $spec\big({\bf F}^{CL}(\Pi)\big) \in {\mathbb D}_o$. 
% \\
%(b) The entropy rate of $Y^n$ is given by
%\begin{align}
%&H_R(Y^\infty)\tri  \lim_{n \longrightarrow \infty}\frac{1}{n}H(Y^n)\\
%&=\frac{1}{2}\log\Big((2\pi e)^{n_y} \det \big({\bf C} \Pi {\bf C}^T + {\bf D} K_{\overline{W}}{\bf D}^T \big) \Big), \; \forall \Pi_1 \succeq 0,  \label{entropy_rate_1}\\
%&I_t^\infty \tri  {\bf C} \big(\Theta_t-\widehat{\Theta}_{t}^\infty\big)+ {\bf D}_t \overline{W}_t^\infty\in  G(0,{\bf C} \Pi {\bf C}^T + {\bf D} K_{\overline{W}}{\bf D}^T \label{kf_m_2_nfb_a}
%\end{align}
%where   is the stationary Gaussian innovations process,  with  $\Pi_t$ replaced by $\Pi$, and the entropy rate $H_R(Y^\infty)$ is  
% %  limit $\lim_{n \longrightarrow \infty} \Sigma_n^o= \Sigma^\infty$
%   independent of the initial data $\Pi_1\succeq 0$. 
%%Then, 
%%\begin{align}
%%&{C}_n^{nfb}(\kappa)
%%=  \sup_{ \frac{1}{n}  {\bf E}\big\{\sum_{t=1}^n  ||X_t||_{{\mathbb R}^{n_x}}^2\big\}\leq \kappa } \frac{1}{2} 
%%\sum_{t=1}^n \log\Big( \frac{ \det( K_{I_t}^{nfb})   }{\det(K_{\hat{I}_t})}\Big)    \label{opt} 
%%\\
%%=& \sup_{ \frac{1}{n}  {\bf E}\big\{\sum_{t=1}^n   tr \Big(\Gamma_t P_t \Gamma_t^T +D_t K_{Z_t}D_t^T\Big) \big\}\leq \kappa } \nonumber \\ &\frac{1}{2} 
%%\sum_{t=1}^n \log \Big( \frac{ \det \Big( {\bf C}_t \Pi_t {\bf C}_t^T + {\bf D}_t K_{\overline{W}_t}{\bf D}_t^T \Big)    }{\det\Big(C_t \Sigma_t C_t^T +N_t K_{W_t} N_t^T\Big)}\Big). \label{cp_12_alt_1_new}
%%\end{align}
\end{theorem}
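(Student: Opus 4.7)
The plan is to mirror the argument sketched for Corollary~\ref{cor_POSS_IH}, but applied to the augmented Gauss-Markov system $(\Theta_t,Y_t)$ with block matrices $({\bf A},{\bf B},{\bf C},{\bf D})$ given in Theorem~\ref{thm_SS_nfb}(i). Because (\ref{DREy}) is structurally identical to (\ref{dre_1})---a generalized Riccati recursion with process/observation noise cross-covariance ${\bf B}K_{\overline W}{\bf D}^T$---the classical Kalman-filter convergence theory applies once the cross-term is absorbed by an output injection, and the proof essentially reduces to citing that theory for the augmented pair.

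First, I would decorrelate the noise. Completing the square in (\ref{DREy}) with the feedback gain $L={\bf B}K_{\overline W}{\bf D}^T({\bf D}K_{\overline W}{\bf D}^T)^{-1}$ rewrites the recursion as the standard DRE
\begin{equation*}
\Pi_{t+1}={\bf A}^*\Pi_t{\bf A}^{*T}+{\bf G}{\bf B}^*{\bf G}^T-{\bf A}^*\Pi_t{\bf C}^T\bigl({\bf D}K_{\overline W}{\bf D}^T+{\bf C}\Pi_t{\bf C}^T\bigr)^{-1}{\bf C}\Pi_t{\bf A}^{*T},
\end{equation*}
with $({\bf A}^*,{\bf G},{\bf B}^*)$ as in (\ref{input_st}). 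Since ${\bf A}^*={\bf A}-L{\bf C}$ differs from ${\bf A}$ by an output-injection term, the PBH test gives detectability of $\{{\bf A}^*,{\bf C}\}$ from detectability of $\{{\bf A},{\bf C}\}$. Together with the stabilizability hypothesis in (\ref{kt}), this is exactly the pair of conditions under which the standard ARE theorem (see Kailath--Sayed--Hassibi or Caines) yields existence of a unique $\Pi=\Pi^T\succeq 0$ solving (\ref{kf_m_4_a_TI_ARE}) with $spec({\bf F}^{CL}(\Pi))\subset\mathbb{D}_o$.

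Next, for Case~1 I would establish convergence $\Pi_n\to\Pi$ for arbitrary $\Pi_1\succeq 0$ by the usual monotonicity/sandwich argument: the Riccati operator is monotone in the positive semidefinite ordering, the iterate from $\Pi_1=0$ is nondecreasing and bounded above (by detectability) and converges to $\Pi$; the iterate from a sufficiently large $\bar\Pi\succeq\Pi_1$ is nonincreasing and also converges to $\Pi$; squeezing the arbitrary iterate between them (after dominating it by stability of ${\bf F}^{CL}(\Pi)$) yields $\lim_{n\to\infty}\Pi_n=\Pi$. For Case~2, since $({\bf A}_n,{\bf B}_n,{\bf C}_n,{\bf D}_n,K_{\overline W_n})\to({\bf A},{\bf B},{\bf C},{\bf D},K_{\overline W})$ elementwise by (\ref{atic_1})--(\ref{atic_2}), the conclusion follows from continuous dependence of the DRE solution on its coefficients combined with the Case~1 limit, exactly as in the proof of Corollary~\ref{cor_POSS_IH}.

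The main obstacle is verifying the detectability hypothesis for the augmented pair in the presence of the block structure ${\bf A}=\mathrm{diag}(F,A)$ and ${\bf C}=(H\Gamma,\ C)$. The PBH test then factors into two conditions, one on $(F,H\Gamma)$ (the channel input realization seen through the channel) and one on $(A,C)$ (the noise realization seen directly at the output); because the noise is permitted to be unstable, the second condition forces every unstable mode of $A$ to be observable from $C$, and analogously for $F$ through $H\Gamma$. This is precisely the structural requirement that distinguishes the present unstable-noise setting from the classical stable one, and it is what makes the limiting $\Pi\succeq 0$ and the induced capacity formula (\ref{ll_3_in}) well defined. Once this detectability is in hand, the remainder is a direct invocation of standard Riccati theory.
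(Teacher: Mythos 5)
Your proof is correct and follows essentially the same route as the paper, which simply invokes the classical Kalman-filter Riccati convergence theory under the detectability/stabilizability hypotheses (\ref{kt}) for Case 1 and continuity of generalized DRE solutions with respect to their coefficients for Case 2 (as in Corollary~\ref{cor_POSS_IH}); you merely fill in the standard decorrelation and monotonicity/sandwich details that the paper delegates to the cited references. Note only that detectability of $\{{\bf A},{\bf C}\}$ is a hypothesis of the theorem, so your closing paragraph about verifying it via the block structure of ${\bf A}$ and ${\bf C}$ is informative context rather than part of the proof obligation.
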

\begin{proof} Similar to Corollary~\ref{cor_POSS_IH}.
\end{proof}
Theorem~\ref{lemma_vanschuppen} identifies conditions for the average power (\ref{cp_9_al_n_nfb_a}) to converge,  using  $P_t =  cov\big(\Xi_t,\Xi_t)$, which satisfies (\ref{lyapunov}). 
\begin{theorem} Convergence of average power\\
\label{lemma_vanschuppen}
Consider the average power of Thm~\ref{thm_SS_nfb}, for Cases 1 or 2. \\
Let  $P_t,  t=1 \ldots, n$ be a solution of Lyapunov recursion (\ref{lyapunov}). \\
Let  $P\succeq 0$ be a solution of 
\begin{align}
P=F  P F^T +G K_{Z}  G^T. \label{MP_1_new}
\end{align}
%The following hold.\\
%(a) If $F$ is an exponentially stable matrix then $\lim_{t\longrightarrow \infty } P_t =P$ exists and $P$ is a solution of the equation (\ref{MP_1_new}) (irrespectively of initial condition).\\
%(b) If $F$ is an exponentially stable matrix then (\ref{MP_1_new}) has an unique solution, which satisfies $P=P^T\succeq  0$. \\
%(c) Let $r \in \{1,2,\ldots,\}$, $B\in {\mathbb R}^{n_\xi \times r}$ be such that   $G Q G ^T= B B^T$.  Assume that $\{F,B\}$ is a  stabilizable  pair and there exists a $P \in {\mathbb R}^{n_\xi \times n_\xi}$ which satisfies
%\begin{align}
%P=F  P F^T +G Q G^T, \hso \mbox{and} \hso P=P^T \succeq 0.
%\end{align}
%Then $F$ is an exponentially stable matrix.\\
%(d) Let $P \in {\mathbb R}^{n_\xi \times n_\xi}$ be a solution of (\ref{MP_1_new}). Any two of the following three statements implies the third:\\
%1) $F$ is an exponentially stable matrix ($spec(F) \subset {\mathbb D}_o$);\\
%2) $\{ A,B\}$ is a controllable pair ($Rank([B\; FB \;\ldots F^{n_\xi-1} B])=n_{\xi}$);\\
%3) $P \succ 0$.\\
Suppose $F$ is an exponentially stable matrix.  Any solution $P_{t}, t=1, 2, \ldots,n$ to the Lyapunov recursion  DRE (\ref{lyapunov}), with arbitrary initial condition $P_{1}\succeq 0$, is such that $\lim_{n \longrightarrow \infty} P_{n}=P$, where $P\succeq  0$ is the unique solution of (\ref{MP_1_new}). 
Moreover, 
%the average power is given by 
 \begin{align}
\lim_{n \longrightarrow \infty}& \frac{1}{n}  {\bf E} \Big\{\sum_{t=1}^{n} ||X_t||_{{\mathbb R}^{n_x}}^2\Big\}= \lim_{n \longrightarrow \infty} \frac{1}{n}  \sum_{t=1}^{n}  tr \Big(\Gamma P_t \Gamma^T +D K_{Z}D^T\Big) \nonumber  \\
=& tr \Big(\Gamma P \Gamma^T +D K_{Z}D^T\Big), \; \forall P_1\succeq 0.   \label{cp_9_al_n_nfb_a2}
\end{align}
%and is independent of $P_1$.
\end{theorem}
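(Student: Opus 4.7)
The plan is to prove the three assertions in order: convergence of the Lyapunov recursion $P_n\to P$ in Case 1, extension to Case 2, and convergence of the Cesàro average of the trace expression.

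For Case 1 I would start from the explicit closed-form solution of the constant-coefficient Lyapunov recursion,
\begin{equation*}
P_t = F^{t-1} P_1 (F^T)^{t-1} + \sum_{k=0}^{t-2} F^{k} G K_{Z} G^T (F^T)^{k},
\end{equation*}
obtained by iterating (\ref{lyapunov}). Since $F$ is exponentially stable, i.e. $spec(F)\subset \mathbb{D}_o$, there exist constants $c\ge 1$ and $\rho\in(0,1)$ with $\|F^k\|\le c\rho^k$; hence $F^{t-1}P_1(F^T)^{t-1}\to 0$ for any $P_1\succeq 0$ and the series $\sum_{k=0}^{\infty} F^k G K_Z G^T (F^T)^k$ converges absolutely to some $P\succeq 0$. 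A direct substitution shows that $P$ satisfies (\ref{MP_1_new}). Uniqueness follows because any two solutions $P,P'$ of the algebraic Lyapunov equation obey $P-P' = F(P-P')F^T = F^n (P-P')(F^T)^n$, and the right-hand side tends to $0$ by exponential stability.

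For Case 2 I would use a perturbation argument. Write the solution of the time-varying recursion via the transition matrix $\Phi_{t,k}\triangleq F_{t-1}F_{t-2}\cdots F_{k}$ (with $\Phi_{t,t}=I$), so that
\begin{equation*}
P_t = \Phi_{t,1} P_1 \Phi_{t,1}^T + \sum_{k=1}^{t-1} \Phi_{t,k+1} G_k K_{Z_k} G_k^T \Phi_{t,k+1}^T.
\end{equation*}
Since $(F_n,G_n,K_{Z_n})\to (F,G,K_Z)$ and $spec(F)\subset \mathbb{D}_o$, standard results on slowly time-varying linear systems (see e.g.\ \cite{caines1988}) give a uniform exponential bound $\|\Phi_{t,k}\|\le c'\rho'^{\,t-k}$ for all sufficiently large $k\le t$ and some $\rho'\in(0,1)$. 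Splitting the sum into a tail where $F_k$ is close to $F$ and a finite transient, one shows that $P_t-P\to 0$; alternatively, one appeals directly to the continuity of solutions of Lyapunov difference equations with respect to their coefficients, which is what is invoked in the proof of Corollary~\ref{cor_POSS_IH}. The same exponential bound forces the initial-condition contribution $\Phi_{t,1}P_1\Phi_{t,1}^T$ to vanish, giving $\lim_{n\to\infty}P_n=P$ for every $P_1\succeq 0$.

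Given $P_n\to P$, the map $Q\mapsto tr(\Gamma Q\Gamma^T + DK_Z D^T)$ is continuous, so $a_t\triangleq tr(\Gamma P_t\Gamma^T+DK_Z D^T)\to tr(\Gamma P\Gamma^T+DK_Z D^T)$. The Cesàro-mean theorem then yields
\begin{equation*}
\lim_{n\to\infty}\frac{1}{n}\sum_{t=1}^n a_t = tr(\Gamma P\Gamma^T+DK_Z D^T),
\end{equation*}
which is exactly (\ref{cp_9_al_n_nfb_a2}) after combining with the identity (\ref{cp_9_al_n_nfb_a}) of Theorem~\ref{thm_SS_nfb}. The main technical obstacle is the Case~2 convergence: pointwise convergence of the coefficients to an exponentially stable limit does not automatically preserve uniform exponential decay of the transition matrix $\Phi_{t,k}$, so one must either quote the continuity-in-coefficients result for Lyapunov recursions or carry out the explicit tail/transient splitting argument sketched above.
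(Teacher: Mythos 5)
Your proof is correct, and it is considerably more detailed than the paper's own argument, which simply defers Case 1 to known results in the cited reference (van Schuppen) and disposes of Case 2 with a one-line appeal to ``continuity of solutions of Lyapunov equations with respect to their coefficients.'' Your Case 1 treatment (explicit iteration of the recursion, geometric bound $\|F^k\|\le c\rho^k$ from $spec(F)\subset{\mathbb D}_o$, convergence of the series to the unique fixed point, and uniqueness via $P-P'=F^n(P-P')(F^T)^n\to 0$) is exactly the standard argument the cited reference encapsulates, so there is no substantive difference there. For Case 2 your transition-matrix splitting is actually the more careful route: the paper's continuity-in-coefficients remark is, strictly speaking, a finite-horizon statement and does not by itself yield $\lim_{n\to\infty}P_n=P$ when the coefficients are only asymptotically time-invariant; your observation that one needs a uniform exponential bound on $\Phi_{t,k}$ for large $k$ (which holds here because $F_n\to F$ with $F$ exponentially stable, e.g.\ by working in a norm in which $\|F\|<1$ so that $\|F_n\|<1$ for all large $n$) correctly identifies and closes that gap. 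The final step, passing from $P_t\to P$ to the Ces\`aro limit of $tr(\Gamma P_t\Gamma^T+DK_ZD^T)$ and hence to (\ref{cp_9_al_n_nfb_a2}) via (\ref{cp_9_al_n_nfb_a}), matches what the paper implicitly uses. In short: same mathematical content, but your write-up is self-contained where the paper relies on citation, and your Case 2 argument is the more rigorous of the two.
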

\begin{proof} For Case 1, these are known\cite{vanschuppen2021}.  For Case 2,  the statements are due to continuity property of solutions of Lyapunov equations, with respect to their coefficients.
\end{proof}
\subsection{Asymptotic Characterizations of Nonfeedback Capacity }
\label{sect:cor-solu}
 \begin{theorem} Characterization of $C^\infty(\kappa,{\bf P}_{Y_1})$ for Case 1 \\ 
 %\label{lem_cov}
\label{prob_1_in}
Consider the time-invariant noise and channel input strategies of Case 1, i.e., (\ref{tic_1}) and (\ref{tic_2})  hold. \\
%Consider Definition~\ref{def_nr_2_nfb} for  the time-invariant stable and unstable PO-SS noise and restrict the channel input to
%\begin{align}
%&X_t^o= \Gamma \Xi_{t} +  D Z^o_t, \hso X_1^o = \Gamma {\Xi}_1 + D Z^o_1, \hso t=2, \ldots, n, \nonumber \\&
% Z_t^o \in N(0, K_{Z}^\infty), \hso K_Z^\infty\succeq 0. \label{ch_in_ti}
%\end{align}
Define the per unit time limit and supremum by\footnote{If at any time $t$, the information $H(Y_t|Y^{t-1})-H(V_t|V^{t-1})=+\infty$ then it is removed,  as it is usually the case \cite{pinsker1964}.}
\begin{align}
C^\infty(\kappa,{\bf P}_{Y_1}) \tri & \sup_{{\cal P}_{\infty}(\kappa)}\lim_{n \longrightarrow\infty}\frac{1}{2n}          \sum_{t=1}^n     \ln   \Big\{     \frac{ \det \big( {\bf C} \Pi_t {\bf C}^T + {\bf D} K_{\overline{W}}{\bf D}^T \big)    }{\det\big(C \Sigma_t C^T +N K_{W} N^T\big)}\Big\}^+ \label{i_ll_2}
%\\
%=&  \max_{\big(\Lambda^\infty, K_{Z}^\infty\big):  \; K_Z^\infty\geq 0,\hso  \big(\Lambda^\infty\big)^2 K^\infty +K_Z^\infty \leq \kappa }  \log\Big( \frac{\big(\Lambda^\infty+c\big)^2 K^\infty  + K_{Z}^\infty +K_{W}}{K_{W}}\Big), \hso   \kappa \in  [0,\infty) \label{i_ll_3} 
\end{align}
where the average power constraint is defined by 
\begin{align}
&{\cal P}_{\infty}(\kappa)\tri   \Big\{(F, G, \Gamma, D, K_{Z}) \in {\cal P}^\infty \Big| \\
&\hspace*{1.cm} \lim_{n \longrightarrow \infty}\frac{1}{n}  \sum_{t=1}^n tr (\Gamma P_t {\Gamma}^T +D K_Z {D}^T)  \leq \kappa\Big\},
%\nonumber \\
%&\hspace*{1.cm} K_{n}^o=  c^2 K_{n-1}^o  + K_{W} -\frac{ \Big( K_{W} + c K_{n-1}^o\Big(\Lambda^\infty + c \Big)\Big)^2}{ \Big(K_{Z}^\infty+ K_{W} + \Big(\Lambda^\infty + c \Big)^2 K_{n-1}^o\Big)}, \hso   K_1^o = \frac{K_{V_1} K_{Z}^\infty}{K_{V_1}+K_{Z}^\infty},  \hso  K_n^o \geq0, \hso  n=1,2, \ldots 
 \label{Q_1_10_s1_new}\\
& {\cal P}^\infty \tri  \Big\{(F, G, \Gamma, D, K_{Z}), \mbox{ such that the following hold}\nonumber \\&\mbox{(i) the detectability and stabilizability of (\ref{st})},\\
& \mbox{(ii) the detectability and stabilizability of (\ref{kt})}\\
& \mbox{(iii) $F$ is exponentially stable}\Big\}.
 \label{adm_set}
 \end{align}
% 
%$\Sigma_n^o \succeq 0 \; n = 1,\dots$, is a solution of (\ref{dre_1_TI}),  and $\Pi^o_n \succeq 0 \; n = 1,\dots$ is a solution of (\ref{kf_m_4_a_TI}).\\
%and under the restriction that $C^\infty(\kappa,{\bf P}_{Y_1})$ is independent of ${\bf P}_{Y_1}$ (i.e., of $K_1^o\geq 0 $) . \\
Then, $C^\infty(\kappa,{\bf P}_{Y_1})$  is given by 
\begin{align}
 C^\infty(\kappa,{\bf P}_{Y_1}) & =  \sup_{ {\cal P}^\infty(\kappa)} \frac{1}{2} \ln \Big\{ \frac{ \det \big( {\bf C} \Pi {\bf C}^T + {\bf D} K_{\overline{W}}{\bf D}^T \big)    }{\det\big(C \Sigma C^T +N K_{W} N^T\big)}\Big\}^+ \nonumber \\& \tri C^{\infty}(\kappa),\hso  \forall {\bf P}_{Y_1} \label{ll_3}
\end{align}
where ${\cal P}^\infty(\kappa)$ is defined by 
\begin{align}
{\cal P}^\infty(\kappa)&\tri  \Big\{(F, G, \Gamma, D, K_{Z})\in {\cal P}^\infty\Big| \nonumber \\ &K_Z\succeq 0, \hso tr( \Gamma P{\Gamma}^T +D K_ZD^T) \leq \kappa 
%&\mbox{$K^\infty\geq 0$ is unique and stabilizable, i.e.,  $|F(K^\infty,\Lambda^\infty,K_Z^\infty)|<1$}
 \Big\} \label{ll_4}
 \end{align}
and $\Sigma \succeq 0$ and $\Pi \succeq 0$ are the unique and stabilizable solutions,  i.e.,  $spec\big(M^{CL}(\Sigma)\big) \in {\mathbb D}_o$ and $spec\big({\bf F}^{CL}(\Pi)\big) \in {\mathbb D}_o$ of the generalized matrix AREs (\ref{dre_1_SS}) and (\ref{kf_m_4_a_TI_ARE}) respectively,   $P\succeq  0$ is the  unique solution of the matrix Lyapunov equation (\ref{MP_1_new}),
%\begin{align}
% &K^\infty= c^2 K^\infty + K_W -\frac{ \Big( K_W + c  K^\infty\big(\Lambda^\infty + c \big)\Big)^2}{ \Big(K_Z^\infty+ K_W + \big(\Lambda^\infty + c \big)^2 K^\infty\Big)}, \label{ll_4_ARE} \\
%& F(K^\infty,\Lambda^\infty, K_{Z}^\infty) \tri  c -M(K^\infty, \Lambda^\infty, K_{Z}^\infty) \Big(\Lambda^\infty +c\Big), \label{ll_4_a} \\
%& M(K^\infty, \Lambda^\infty, K_{Z}^\infty)  \tri   \Big( K_{W} + c  K^\infty\Big(\Lambda^\infty + c \Big)\Big)\Big(K_{Z}^\infty+ K_{W} + \Big(\Lambda^\infty + c \Big)^2 K^\infty\Big)^{-1} \label{ll_4_b}
%\end{align}
provided there exists $\kappa\in [0,\infty)$, such that ${\cal P}^\infty(\kappa)$ is non-empty.\\
%for values of $\kappa \in {\cal K}^{\infty}(c, K_{W})$, where   
%\bea
%{\cal K}^{\infty}(c, K_{W})\tri \Big\{ \kappa \in [0,\infty): \hso (\Lambda^\infty, K_Z^\infty)\in {\cal P}^\infty(\kappa)\Big\}.
%\eea
Moreover, the optimal $(F, G, \Gamma, D, K_{Z}) \in {\cal P}^\infty(\kappa)$, is such that, \\
 (i) if the noise is stable,  then the input and the output processes $(X_t, Y_t), t=1, \ldots$ are asymptotic stationary and \\
 (ii) if the noise is unstable,  then the input and the innovations processes $(X_t, I_t), t=1, \ldots$ are asymptotic stationary. 
\end{theorem}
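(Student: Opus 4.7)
My plan is to fix an admissible time-invariant strategy, use the three convergence results (Corollary~\ref{cor_POSS_IH}, Theorem~\ref{thm_fc_IH}, Theorem~\ref{lemma_vanschuppen}) to replace the DREs and Lyapunov recursion by their algebraic counterparts pointwise in the strategy, and then conclude by showing the two optimization domains coincide. I would first fix an arbitrary $(F,G,\Gamma,D,K_Z)\in\mathcal{P}^\infty$ and an arbitrary initial Gaussian law ${\bf P}_{Y_1}$, which determines $\Pi_1,\Sigma_1,P_1\succeq 0$. Under the detectability/stabilizability hypotheses, Corollary~\ref{cor_POSS_IH} yields $\Sigma_t\to\Sigma$ (the unique stabilizing ARE solution) and Theorem~\ref{thm_fc_IH} yields $\Pi_t\to\Pi$, both independently of the initial covariances $\Sigma_1,\Pi_1$. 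Because $NK_WN^T\succ 0$ and ${\bf D}K_{\overline W}{\bf D}^T\succeq NK_WN^T\succ 0$, the two denominator matrices $C\Sigma_t C^T+NK_WN^T$ and ${\bf C}\Pi_t{\bf C}^T+{\bf D}K_{\overline W}{\bf D}^T$ remain bounded below by a fixed positive-definite matrix for all $t$, so the map $(\Pi_t,\Sigma_t)\mapsto \tfrac{1}{2}\ln\{\det({\bf C}\Pi_t{\bf C}^T+{\bf D}K_{\overline W}{\bf D}^T)/\det(C\Sigma_t C^T+NK_WN^T)\}^+$ is continuous at the limit and produces a bounded sequence.

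Next, by the Ces\`aro mean theorem applied to a convergent bounded sequence, the inner $n$-time average in (\ref{i_ll_2}) converges to the value of the integrand evaluated at $(\Pi,\Sigma)$, independently of ${\bf P}_{Y_1}$. In parallel, since $F$ is exponentially stable (clause (iii) of $\mathcal{P}^\infty$), Theorem~\ref{lemma_vanschuppen} gives $P_t\to P$, the unique solution of the Lyapunov equation (\ref{MP_1_new}), and equation (\ref{cp_9_al_n_nfb_a2}) gives $\lim_n\tfrac{1}{n}\sum_{t=1}^n tr(\Gamma P_t\Gamma^T+DK_ZD^T)=tr(\Gamma P\Gamma^T+DK_ZD^T)$, independently of $P_1$. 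Consequently, the asymptotic power set $\mathcal{P}_\infty(\kappa)$ intersected with $\mathcal{P}^\infty$ coincides with the algebraic set $\mathcal{P}^\infty(\kappa)$ defined in (\ref{ll_4}). Combining these two pointwise facts shows that the inner limit in (\ref{i_ll_2}) exists on every element of $\mathcal{P}^\infty(\kappa)$ and equals the integrand of (\ref{ll_3}) evaluated at the ARE/Lyapunov solutions. Taking the supremum on both sides then yields the equality $C^\infty(\kappa,{\bf P}_{Y_1})=C^\infty(\kappa)$, and because every convergence invoked was uniform in the initial conditions, the right-hand side does not depend on ${\bf P}_{Y_1}$.

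For the two asymptotic-stationarity statements, I would argue as follows. Once an optimizer in $\mathcal{P}^\infty(\kappa)$ is fixed, the input state satisfies $\Xi_{t+1}=F\Xi_t+GZ_t$ with $F$ exponentially stable, so $P_t\to P$ and $\Xi_t$ (and hence $X_t=\Gamma\Xi_t+DZ_t$) is asymptotically wide-sense stationary. If the noise is stable, i.e.\ $A$ is exponentially stable, the same argument applied to $S_{t+1}=AS_t+BW_t$ shows $V_t$ and therefore $Y_t=HX_t+V_t$ are asymptotically stationary, establishing (i). If the noise is unstable, $Y_t$ itself cannot be asymptotically stationary, but the innovations covariance $K_{I_t}={\bf C}\Pi_t{\bf C}^T+{\bf D}K_{\overline W}{\bf D}^T$ converges to the stationary value ${\bf C}\Pi{\bf C}^T+{\bf D}K_{\overline W}{\bf D}^T$ by Theorem~\ref{thm_fc_IH}, and the orthogonality/independence structure of $I_t$ together with the stability of the closed-loop filter ${\bf F}^{CL}(\Pi)$ guaranteed by the stabilizing nature of $\Pi$ gives asymptotic stationarity of $I_t$, establishing (ii).

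The main technical obstacle I anticipate is checking that all three convergence results can be invoked \emph{uniformly} over the relevant families — in particular, that the log-determinant integrand is dominated by an integrable (here, bounded) quantity so that Ces\`aro convergence genuinely transfers from the pointwise limit of the summand to the limit of the time average. This is handled by the lower bound $C\Sigma_t C^T+NK_WN^T\succeq NK_WN^T\succ 0$ coming from (\ref{cp_e_ar2_s1_a_new}) and the analogous bound in the numerator; without the strict positivity $R_t\succ 0$ the log-determinant ratio could diverge at isolated $t$, but this is precisely what the noise assumption rules out. The remaining issue — that the supremum of the limit equals the limit evaluated at the optimizer — is not an interchange of sup and limit in the usual hard sense here, because for each fixed strategy in $\mathcal{P}^\infty$ the inner limit is explicit, so the equality $\sup\lim=\lim\text{ evaluated at }\sup$ reduces to noting that the two constraint sets $\mathcal{P}_\infty(\kappa)\cap\mathcal{P}^\infty$ and $\mathcal{P}^\infty(\kappa)$ are identical, which is exactly what Theorem~\ref{lemma_vanschuppen} delivers.
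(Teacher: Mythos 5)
Your proposal is correct and follows essentially the same route as the paper's own proof: invoke Corollary~\ref{cor_POSS_IH}, Theorem~\ref{thm_fc_IH} and Theorem~\ref{lemma_vanschuppen} on each element of ${\cal P}^\infty$ to obtain the limits (\ref{conve_1})--(\ref{conve_2}) independently of the initial data, identify the constraint sets, take the supremum, and deduce the stationarity claims from the asymptotic Kalman-filter properties (stability of $F$, of $M^{CL}(\Sigma)$ and of ${\bf F}^{CL}(\Pi)$). Your write-up merely fills in details the paper leaves implicit (the Ces\`aro-mean step, the lower bound ${\bf D}K_{\overline W}{\bf D}^T \succeq N K_W N^T \succ 0$ ensuring the log-determinant ratio is well behaved, and the explicit identification ${\cal P}_\infty(\kappa)={\cal P}^\infty(\kappa)$), so no substantive difference in approach.
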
 
\vspace{-0.5cm}
 \begin{proof} By the definition of the set ${\cal P}^\infty$, then 
 Corollary~\ref{cor_POSS_IH}, Theorem~\ref{thm_fc_IH} and Theorem~\ref{lemma_vanschuppen} hold. 
 % The   sequences $\{\Sigma_t^o,\Pi_{t}^o: t=2, \ldots, n\}$  satisfy the time-invariant matrix generalized DREs (\ref{dre_1_TI}),  (\ref{kf_m_4_a_TI}) with any conditions, $\Sigma_1^o \succeq 0, \; \Pi_{1}^o\succeq 0$. 
%% If $K_{1}^o= 0$, then $I(X_1^o;Y_1^o)= \log  \Big( \frac{K_{Z}^\infty +K_{V_1}}{K_{V_1}}\Big)=+\infty$ (which is allowed), however, we remove it  by applying  Steps \#1-\#3 of Remark~\ref{rem:gp}, without affecting the rate. 
%  Combining Theorem~\ref{thm_ric}.(4), (1),  the definition of detectability and stabilizability conditions of (\ref{st}) and the set ${\cal P}^\infty$, characterize  the necessary and sufficient conditions for convergence of  $\lim_{n \longrightarrow \infty} \Sigma_{n}^o =\Sigma^\infty$, where $\Sigma^\infty=\Sigma^\infty\succeq 0$ is the unique stabilizing solution of the generalized matrix ARE given in (\ref{dre_1_SS}), from all initial $\Sigma_1^o \succeq 0$. 
%  Similarly,  from the continuity of the solutions of the generalized matrix DRE with respect to their coefficients and the detectability and stabilizability of (\ref{kt}) hold, then $\lim_{n \longrightarrow \infty} \Pi_{n}^o =\Pi^\infty$, where $ \Pi^\infty=\Pi^\infty(F^{\infty}, G^{\infty}, \Gamma^{\infty}, D^{\infty}, K_{Z}^{\infty},\Sigma^\infty)\succeq  0$ is the unique stabilizing solution of the  generalized matrix ARE given in   (\ref{kf_m_4_a_TI_ARE}) 
% from all initial  $\Pi_1^o \succeq 0$. 
Hence, the following summands converge in $[0,\infty)$ uniformly, $\forall {\bf P}_{Y_1}$:
 \begin{align}
&\lim_{n \longrightarrow\infty}\frac{1}{n}
 \sum_{t=1}^n tr(\Gamma P_t \Gamma^T +D K_Z D^T) =  tr(\Gamma P \Gamma^T +D K_Z D^T), \label{conve_1} \\
 & \lim_{n \longrightarrow\infty}\frac{1}{2n}    \Big\{  \sum_{t=1}^n  \ln \big( \frac{ \det \big( {\bf C} \Pi_t {\bf C}^T + {\bf D} K_{\overline{W}}{\bf D}^T \big)    }{\det\big(C \Sigma_t C^T +N K_{W} N^T\big)}\big)\Big\} \nonumber \\ 
 &=\frac{1}{2} \ln \big( \frac{ \det \big( {\bf C} \Pi {\bf C}^T + {\bf D} K_{\overline{W}}{\bf D}^T \big)    }{\det\big(C \Sigma C^T +N K_{W} N^T\big)}\big), \hso   \forall {\bf P}_{Y_1} .  \label{conve_2}
 \end{align} 
% where $I(X_1^o;Y_1^o)= \log  \Big( \frac{K_{Z}^\infty +K_{V_1}}{K_{V_1}}\Big)$ is removed from (\ref{conve_2}), when $K_{V_1}=0$. 
%That is, the convergence is  uniform with respect to ${\bf P}_{Y_1}$.  
% This establishes the characterization of  the right hand side of   (\ref{ll_3}), and its   independence on ${\bf P}_{Y_1}$,    $K_{S_1}\succeq 0$. 
 The last part of the theorem follows from the asymptotic properties of the Kalman-filter, as follows.  For (i).  $\Xi_{t}, t=1, \ldots$ is asymptotically stationary, which implies $X_t= \Gamma \Xi_{t} +  D Z_t,  X_1 = \Gamma {\Xi}_1 + D Z_1,  t=2, \ldots, n,
 Z_t \in G(0, K_{Z}), \hso K_Z\succeq 0$,  $I_t, t=1, \ldots$ and $Y_t=HX_t+ V_t, t=1, \ldots$ are   asymptotically stationary.  Similarly for (ii), with the exception that $Y_t=H X_t+ V_t, t=1, \ldots$ is not asymptotically stationary, because $V_t, t=1, \ldots $ is unstable.
%The inequality in  (\ref{ll_2}) is obtained  as follows.   $\frac{1}{2n}  \max_{\big(\Lambda^\infty, K_{Z}^\infty\big) \in {\cal P}_{[0,n]}^\infty(\kappa)} \big\{\cdot\} \geq  \frac{1}{2n}\big\{\cdot\}$, for all strategies in ${\cal P}_{[0,\infty]}^\infty(\kappa)$.  Then take the limit on both sides,  followed by the maximum over all strategies in ${\cal P}_{[0,\infty]}^\infty(\kappa)$,  i.e.,  $\lim_{n \longrightarrow\infty}\frac{1}{2n}  \max_{\big(\Lambda^\infty, K_{Z}^\infty\big) \in {\cal P}_{[0,n]}^\infty(\kappa)} \big\{\cdot\} \geq  \max_{\big(\Lambda^\infty, K_{Z}^\infty\big) \in {\cal P}_{[0,\infty]}^\infty(\kappa)} \lim_{n \longrightarrow\infty}\frac{1}{2n}\big\{\cdot\}$.    
 \end{proof}

Next, we show that Theorem~\ref{prob_1_in} remains valid for Case 2.
%, i.e., when the matrices are  asymptotically time-invariant. 

\begin{corollary} Characterization of $C^\infty(\kappa,{\bf P}_{Y_1})$ for Case 2 \\ 
 \label{the_limsup}
Consider the asymptotically time-invariant noise and channel input strategies of Case 2, i.e., (\ref{atic_1}) and  (\ref{atic_2})  hold.\\
%\begin{align}
%&X_t^o= \Gamma_t\Xi^o_{t} +  D_t Z^o_t, \hso X_1^o = \Gamma_1{\Xi}^o_1 + D_1Z^o_1, \hso t=2, \ldots, \nonumber \\&
% Z_t^o \in N(0, K_{Z_t}), \hso K_{Z_t}\succeq 0 \label{ch_in_ati}
%\end{align} 
%  such that  strategies are asymptotically time-invariant,  $\lim_{n \longrightarrow \infty} (F_n, G_n, \Gamma_n, D_n K_{Z_n})=(F^{\infty}, G^{\infty}, \Gamma^{\infty}, D^{\infty}, K_{Z}^{\infty}) K_{Z}^\infty\succeq 0$. \\
Define the per unit time limit and supremum by %(keeping in mind footnote 2)
\begin{align}
&C^{\infty,+}(\kappa,{\bf P}_{Y_1}) \tri  \sup_{{\cal P}_{\infty}^{+}(\kappa) }\lim_{n \longrightarrow\infty}\frac{1}{2n}  \Big\{       \nonumber \\
& \hspace*{1.0cm} \sum_{t=1}^n \ln \Big\{ \frac{ \det \big( {\bf C}_t \Pi_t {\bf C}_t^T + {\bf D}_t K_{\overline{W}_t}{\bf D}_t^T \big)    }{\det\big(C_t \Sigma_t C_t^T +N_t K_{W_t} N_t^T\big)}\Big\}^+\Big\} \label{i_ll_2_ati}\\
%\\
%=&  \max_{\big(\Lambda^\infty, K_{Z}^\infty\big):  \; K_Z^\infty\geq 0,\hso  \big(\Lambda^\infty\big)^2 K^\infty +K_Z^\infty \leq \kappa }  \log\Big( \frac{\big(\Lambda^\infty+c\big)^2 K^\infty  + K_{Z} +K_{W}}{K_{W}}\Big), \hso   \kappa \in  [0,\infty) \label{i_ll_3} 
&{\cal P}_{\infty}^+(\kappa)\tri   \Big\{\{(F_n, G_n, \Gamma_n, D_n K_{Z_n})| n=1,2, \ldots \} \in  {\cal P}_{\infty}^+\Big|   \nonumber \\
&\hspace*{1.0cm}  \lim_{n \longrightarrow\infty}\frac{1}{n}  \sum_{t=1}^n tr\big(\Gamma_t P_t \Gamma_t^T +D_t K_{Z_t} D_t^T\big)\big\} \leq \kappa\Big\},   \label{Q_1_10_s1_new_ati} \\
  & {\cal P}_\infty^+ \tri  \Big\{\{ (F_n, G_n, \Gamma_n, D_n, K_{Z_n}) |n = 1,2,\dots\}\Big| \nonumber \\ 
  &  \lim_{n \longrightarrow \infty} (F_n, G_n, \Gamma_n, D_n K_{Z_n})=(F, G, \Gamma, D, K_{Z})\in    {\cal P}^\infty    \Big\}.  \label{adm_set_ati}
%  \\
%&\Sigma_{t+1}^o= A \Sigma_{t}^oA^T  + B K_{W}B^T -\Big(A  \Sigma_{t}^oC^T+BK_{W}N^T  \Big) \nonumber \\
%&\hspace{0.1cm} .\Big(N K_{W} N^T+C  \Sigma_{t}^o C^T\Big)^{-1} \Big( A  \Sigma_{t}^oC^T+ B K_{W}N^T  \Big)^T, \nonumber\\   & \hspace{0.1cm}\hso \Sigma_t^o \succeq 0,    \hso t=1, \ldots, n, \hso \Sigma_{1}^o=K_{S_1}\succeq 0,\label{Q_1_10_s1_new_ati_DRE}\\
%&\Pi^o_{t+1}= {\bf A} \Pi_{t}{\bf A}^T  + {\bf B}K_{\overline{W}}{\bf B}^T -\Big({\bf A}  \Pi^o_{t}{\bf C}^T+{\bf B}K_{\overline{W}}{\bf D}^T  \Big) \nonumber \\
%& \hspace{0.1cm} . \Big({\bf D} K_{\overline{W}} {\bf D}^T+{\bf C}  \Pi^o_{t} {\bf C}^T\Big)^{-1}\Big( {\bf A}  \Pi^o_{t}{\bf C}^T+ {\bf B} K_{\overline{W}}{\bf D}^T  \Big)^T, \nonumber \\&\hspace{0.1cm} \Pi^o_t \succeq 0,  \hso \Pi^o_{1}=K_{\Theta_1}\succeq 0, \hso t=1, \ldots, n.\label{Q_1_10_s1_new_ati_DRE2}\\
%&P^o_{t+1} = F_t P^o_t F_t^T + G_t K_{Z_t}G_t^T, \hso P_t\succeq 0, \hso P^o_1=K_{\Xi_1} \label{lyapunovv}
 \end{align}
%and under the restriction that $C^{\infty,+}(\kappa,{\bf P}_{Y_1})$ is independent of ${\bf P}_{Y_1}$, $K_1^o\geq 0 $. \\
Then, 
\begin{align}
 C^{\infty,+}(\kappa,{\bf P}_{Y_1})=C^{\infty}(\kappa,{\bf P}_{Y_1})=C^{\infty}(\kappa)=\mbox{(\ref{ll_3})}, \hso \forall {\bf P}_{Y_1}.
 \end{align}
 and the statements of Theorem~\ref{prob_1_in}.(i), (ii), remain valid.
\end{corollary}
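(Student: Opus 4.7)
The plan is to reduce Case 2 to Case 1 via two ingredients: (a) continuity of the solutions of the generalized Riccati and Lyapunov recursions with respect to their coefficients, which is exactly the continuity argument already invoked in Corollary~\ref{cor_POSS_IH}, Theorem~\ref{thm_fc_IH}, and Theorem~\ref{lemma_vanschuppen}; and (b) the Ces\`aro-mean property that a convergent sequence has its running average converging to the same limit.

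First, I would fix any feasible sequence $\{(F_n, G_n, \Gamma_n, D_n, K_{Z_n})\} \in {\cal P}_\infty^+(\kappa)$. By the very definition of ${\cal P}_\infty^+$ in (\ref{adm_set_ati}), it converges elementwise to a limit tuple $(F, G, \Gamma, D, K_Z) \in {\cal P}^\infty$; combined with the asymptotic time-invariance of the noise (\ref{atic_1}), this places the hypotheses of Corollary~\ref{cor_POSS_IH}, Theorem~\ref{thm_fc_IH}, and Theorem~\ref{lemma_vanschuppen} in force. Hence $\Sigma_n \to \Sigma$, $\Pi_n \to \Pi$, and $P_n \to P$, where $\Sigma$, $\Pi$, $P$ are the stabilizing solutions of the AREs (\ref{dre_1_SS}), (\ref{kf_m_4_a_TI_ARE}), and the Lyapunov equation (\ref{MP_1_new}) constructed from the limit tuple.

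Next, every summand in the per-unit-time objective (\ref{i_ll_2_ati}) and in the power constraint (\ref{Q_1_10_s1_new_ati}) is a continuous function of $(\Sigma_t, \Pi_t, P_t)$ together with the time-varying coefficients, so each summand converges to its stationary value. By the Ces\`aro lemma, the running averages converge to the same limits:
\[
\lim_{n \to \infty}\frac{1}{n}\sum_{t=1}^n tr\big(\Gamma_t P_t \Gamma_t^T + D_t K_{Z_t} D_t^T\big) = tr\big(\Gamma P \Gamma^T + D K_Z D^T\big),
\]
and an analogous identity holds for the log-determinant ratio. Consequently, the admissibility constraint embedded in ${\cal P}_\infty^+(\kappa)$ is, in the limit, equivalent to the stationary constraint $tr(\Gamma P \Gamma^T + D K_Z D^T) \leq \kappa$ imposed by ${\cal P}^\infty(\kappa)$, and the per-unit-time objective reduces to the stationary objective of Theorem~\ref{prob_1_in}. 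Equating the two suprema is then immediate: every stationary tuple in ${\cal P}^\infty(\kappa)$ belongs to ${\cal P}_\infty^+(\kappa)$ through its constant sequence, giving $C^{\infty,+}(\kappa,{\bf P}_{Y_1}) \geq C^\infty(\kappa,{\bf P}_{Y_1})$, while each element of ${\cal P}_\infty^+(\kappa)$ collapses by the above onto a limit tuple in ${\cal P}^\infty(\kappa)$ attaining the same value, proving the reverse inequality. Parts (i)--(ii) of Theorem~\ref{prob_1_in} transfer verbatim, since the error recursion (\ref{error2}) built from the limit coefficients inherits exponential stability, and $\Xi_t$, $X_t$, $I_t$ become asymptotically stationary as before.

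The main obstacle is establishing uniform convergence in ${\bf P}_{Y_1}$, equivalently uniform convergence over initial conditions $(\Sigma_1, \Pi_1, P_1)$. This follows from the stabilizing properties $spec({\bf F}^{CL}(\Pi)) \subset {\mathbb D}_o$, $spec(M^{CL}(\Sigma)) \subset {\mathbb D}_o$, and exponential stability of $F$, which force the initial-condition-dependent transient to decay geometrically; normalization by $1/n$ then erases it, removing any residual dependence on ${\bf P}_{Y_1}$ and delivering the claimed equality $C^{\infty,+}(\kappa,{\bf P}_{Y_1})=C^\infty(\kappa,{\bf P}_{Y_1})=C(\kappa)$ for every ${\bf P}_{Y_1}$.
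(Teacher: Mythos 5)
Your proposal is correct and follows essentially the same route as the paper: continuity of the generalized DRE and Lyapunov recursions in their coefficients, the Ces\`aro limits (\ref{conve_1_tv})--(\ref{conve_2_tv}) for the time-varying averages, and then a reduction to (repetition of) the proof of Theorem~\ref{prob_1_in}. You merely spell out more explicitly the two-sided supremum argument and the uniformity over initial conditions, which the paper leaves implicit in ``the rest follows by repeating the proof of Theorem~\ref{prob_1_in}.''
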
 
\begin{proof} 
The solutions of the DREs and the Lyapunov equation are,  $\Sigma_{n+1}= \Sigma_{n+1}(\Sigma_n,A_n,B_n,C_n,N_n,K_{W_n})$, $\Pi_{n+1}=\Pi_{n+1}(\Pi_{n},\Sigma_{n},{\bf A}_n, {\bf B}_n,{\bf C}_n,{\bf D}_n, K_{\overline{ W}_n}), $\\$ P_{n+1} = P_{n+1}(P_n,F_n,G_n,\Gamma_n,D_n, K_{Z_n}),\;n=1,2, \ldots$ and these are continuous with respect to their coefficients. Moreover, for all elements of the set ${\cal P}^\infty$,  by (\ref{atic_1}), then 
 \begin{align}
&\lim_{n \longrightarrow\infty}\frac{1}{n}\sum_{t=1}^n tr (\Gamma_t P_t {\Gamma_t}^T +D_t K_{Z_t} {D_t}^T) =tr( \Gamma P \Gamma^T +D K_Z D^T), \; \forall {\bf P}_{Y_1},     \label{conve_1_tv} \\
 & \lim_{n \longrightarrow\infty}\frac{1}{2n}  \sum_{t=1}^n \ln \Big\{ \frac{ \det \big( {\bf C}_t \Pi_t {\bf C}_t^T + {\bf D}_t K_{\overline{W}_t}{\bf D}_t^T \big)    }{\det\big(C_t \Sigma_t C_t^T +N_t K_{W_t} N_t^T\big)}\Big\}^+ \nonumber \\
 &=\frac{1}{2} \ln \big( \frac{ \det \big( {\bf C} \Pi {\bf C}^T + {\bf D} K_{\overline{W}}{\bf D}^T \big)    }{\det\big(C \Sigma C^T +N K_{W} N^T\big)}\big), \hso \forall {\bf P}_{Y_1}.   \label{conve_2_tv}
 \end{align} 
The rest follows by repeating the proof of Theorem~\ref{prob_1_in}.
\end{proof} 

Identity $C^{o}(\kappa,{\bf P}_{Y_1}) =C^{\infty}(\kappa,{\bf P}_{Y_1}) = C^{\infty}(\kappa), \forall {\bf P}_{Y_1}$ for Case 2, follows from the uniform convergence of Theorem~\ref{prob_1_in} and Corollary~\ref{the_limsup}; the derivation is omitted due to space limitation.

\begin{theorem} Characterization of $C^o(\kappa,{\bf P}_{Y_1})$ for Case 2 \\ 
 \label{thm:limsup_ati}
Consider the asymptotically time-invariant noise and channel input strategies of Case 2,  i.e., (\ref{atic_1}) and (\ref{atic_2}) hold. \\
Define the per unit time limit and supremum by 
\begin{align}
&C^{o}(\kappa,{\bf P}_{Y_1}) \tri \lim_{n \longrightarrow\infty} \sup_{{\cal P}_{n}^{o,+}(\kappa)}\frac{1}{2n} \Big\{ \nonumber \\
&\hspace*{1.0cm} \sum_{t=1}^n \ln \Big\{ \frac{ \det \big( {\bf C}_t \Pi_t {\bf C}_t^T + {\bf D}_t K_{\overline{W}_t}{\bf D}_t^T \big)    }{\det\big(C_t \Sigma_t C_t^T +N_t K_{W_t} N_t^T\big)}\Big\}^+\Big\} \label{limsup_ati_1}\\
&{\cal P}_{\infty}^+(\kappa)\tri   \Big\{\{(F_n, G_n, \Gamma_n, D_n K_{Z_n})| n=1,2, \ldots \} \in  {\cal P}_{\infty}^+\Big|   \nonumber \\
&\hspace*{1.0cm} \frac{1}{n} \sum_{t=1}^n tr(\Gamma_t P_t \Gamma_t^T +D_t K_{Z_t} D_t^T)\big\} \leq \kappa\Big\}. \label{limsup_ati_2} 
\end{align}
%$ P_{t}^o\succeq 0, t=1, \ldots, n $satisfies the Lyapunov equation in (\ref{lyapunovv}), ${\cal P}_\infty^+, {\cal P}^\infty$ is defined by (\ref{adm_set_ati}),  (\ref{adm_set}).
%%and under the restriction that $C^{\infty,+}(\kappa,{\bf P}_{Y_1})$ is independent of ${\bf P}_{Y_1}$, $K_1^o\geq 0 $. \\
Then, 
%$\forall {\bf P}_{Y_1}, K_{S_1}\succeq 0$, 
\begin{align}
 C^{o}(\kappa,{\bf P}_{Y_1})=C^{\infty}(\kappa,{\bf P}_{Y_1})=C^{\infty}(\kappa)=\mbox{(\ref{ll_3})}, \: \forall {\bf P}_{Y_1}
 \end{align}
 and the statements of Theorem~\ref{prob_1_in}.(1), (ii),  remain valid.
\end{theorem}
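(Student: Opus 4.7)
The plan is to establish $C^{o}(\kappa,{\bf P}_{Y_1}) = C^{\infty,+}(\kappa,{\bf P}_{Y_1})$, since the right-hand side was already identified with $C^{\infty}(\kappa)$ in Corollary~\ref{the_limsup}. The proof splits into two opposite inequalities, the first routine and the second relying on the uniform-in-initial-condition convergence provided by Corollary~\ref{cor_POSS_IH}, Theorem~\ref{thm_fc_IH} and Theorem~\ref{lemma_vanschuppen}.

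First I would prove the easy direction $C^{o}(\kappa,{\bf P}_{Y_1}) \geq C^{\infty,+}(\kappa,{\bf P}_{Y_1})$. Take an arbitrary asymptotically time-invariant sequence $\{(F_n,G_n,\Gamma_n,D_n,K_{Z_n})\} \in {\cal P}_\infty^+(\kappa)$. At each horizon $n$ its restriction to the first $n$ indices is feasible for the finite-horizon constraint in (\ref{limsup_ati_2}), so the per-$n$ Cesàro value along this sequence is dominated by the finite-horizon supremum appearing in (\ref{limsup_ati_1}). Sending $n \to \infty$ bounds the $\lim$ on the left by the $\lim$ on the right, and taking supremum over ${\cal P}_\infty^+(\kappa)$ then yields $C^{\infty,+}(\kappa,{\bf P}_{Y_1}) \leq C^{o}(\kappa,{\bf P}_{Y_1})$.

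The main obstacle is the reverse inequality. Here I would argue as follows. By Corollary~\ref{cor_POSS_IH}, Theorem~\ref{thm_fc_IH} and Theorem~\ref{lemma_vanschuppen}, for every fixed tuple $(F,G,\Gamma,D,K_Z) \in {\cal P}^\infty$, the solutions $\Sigma_t$, $\Pi_t$, $P_t$ of the DREs (\ref{DREy}), (\ref{dre_1}) and the Lyapunov recursion (\ref{lyapunov}) converge exponentially to their ARE/ALE limits, uniformly in the initial conditions $\Sigma_1,\Pi_1,P_1 \succeq 0$, and by continuity of the DRE and Lyapunov maps in the coefficient matrices, this convergence is preserved under the asymptotic time-invariance (\ref{atic_1})--(\ref{atic_2}) of Case 2. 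Fix $\varepsilon>0$ and pick an $\varepsilon$-optimizer $(F^\star,G^\star,\Gamma^\star,D^\star,K_Z^\star) \in {\cal P}^\infty(\kappa)$ of the static problem (\ref{ll_3})--(\ref{ll_4}); lifting it to a constant sequence produces an element of ${\cal P}_\infty^+(\kappa)$ whose Cesàro objective converges to a value within $\varepsilon$ of $C^\infty(\kappa)$, so $C^{\infty,+}(\kappa,{\bf P}_{Y_1}) \geq C^\infty(\kappa) - \varepsilon$. The technical step is the complementary upper bound $\sup_{{\cal P}_n^{o,+}(\kappa)}(\cdot) \leq C^\infty(\kappa) + o(1)$: since the per-stage log-determinant ratio is a continuous functional of $(\Pi_t,\Sigma_t,P_t)$ which exponentially track their ARE/ALE limits, any sequence of $\varepsilon$-optimal finite-horizon strategies must have Cesàro-average value dominated by the ARE-based expression evaluated at a limit-point tuple lying in ${\cal P}^\infty(\kappa)$, via a compactness/tightness extraction of a convergent subsequence of near-optimizers combined with continuity of the ARE and Lyapunov maps in $(F,G,\Gamma,D,K_Z)$ and the closedness of ${\cal P}^\infty$ under the convergence (\ref{atic_2}).

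Independence from ${\bf P}_{Y_1}$ and the asymptotic-stationarity conclusions (i), (ii) transfer verbatim from Theorem~\ref{prob_1_in} and Corollary~\ref{the_limsup}, since the convergence of $\Sigma_n,\Pi_n,P_n$ is uniform in $\Sigma_1,\Pi_1,P_1 \succeq 0$ and hence in ${\bf P}_{Y_1}$, so the $\limsup$ and $\liminf$ coincide in (\ref{limsup_ati_1}) and the interchange of $\lim$ and $\sup$ is legitimate. Letting $\varepsilon \to 0$ yields $C^{o}(\kappa,{\bf P}_{Y_1}) = C^{\infty,+}(\kappa,{\bf P}_{Y_1}) = C^{\infty}(\kappa)$ for every ${\bf P}_{Y_1}$.
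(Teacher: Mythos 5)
Your overall plan---split the claim into the easy inequality $C^{\infty,+}(\kappa,{\bf P}_{Y_1})\leq C^{o}(\kappa,{\bf P}_{Y_1})$ plus a reverse bound, and drive the reverse bound by the uniform convergence of $\Sigma_t,\Pi_t,P_t$ together with Theorem~\ref{prob_1_in} and Corollary~\ref{the_limsup}---is the same route the paper takes (its proof is essentially a one-line appeal to the uniform limits (\ref{conve_1_tv})--(\ref{conve_2_tv}) and those two results). Your easy direction and the lower bound $C^{\infty,+}\geq C^{\infty}(\kappa)-\varepsilon$ via a constant near-optimal tuple are fine.

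The gap is in your upper bound $\sup_{{\cal P}_{n}^{o,+}(\kappa)}(\cdot)\leq C^{\infty}(\kappa)+o(1)$. First, at a fixed horizon $n$ the admissible objects are whole sequences $\{(F_t,G_t,\Gamma_t,D_t,K_{Z_t})\}$ whose membership in ${\cal P}_{\infty}^{+}$ constrains only their tail as $t\to\infty$, which the finite-horizon objective never sees; hence the first $n$ coefficients of an $n$-horizon near-optimizer are arbitrary time-varying matrices subject only to the power constraint, and there is no single ``limit-point tuple'' to extract. Your subsequence argument therefore presupposes the crux of the statement, namely that time-varying strategies over the horizon cannot asymptotically outperform time-invariant ones; without that reduction the extraction has nothing to converge to. Second, even after a constant-tuple reduction, the compactness and closedness you invoke fail: the power constraint bounds $tr(\Gamma P\Gamma^T+DK_ZD^T)$ but not $F,G$ themselves; detectability and stabilizability in (\ref{st}), (\ref{kt}) are not closed under limits; and exponential stability of $F$ can degenerate to marginal stability at a limit point, where the Lyapunov equation (\ref{MP_1_new}) has no solution, so the limit tuple need not lie in ${\cal P}^{\infty}(\kappa)$. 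Finally, the uniformity you cite (exponential tracking uniform in $\Sigma_1,\Pi_1,P_1$) is uniformity in the initial data only; to interchange the limit and the supremum one needs uniformity of (\ref{conve_1_tv})--(\ref{conve_2_tv}) over the class of admissible strategies as well, which is exactly what the paper's terse proof asserts and what your proposal would still have to establish.
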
 
\begin{proof} The derivation uses the uniform limits (\ref{conve_1_tv}) and (\ref{conve_2_tv}), Theorem~\ref{prob_1_in} and Corollary~\ref{the_limsup}.
% then (\ref{conve_1_def}) holds with $ \frac{1}{n} \sum_{t=1}^n tr \big(\Gamma P_t \Gamma^T +D K_Z D^T\big) $ replaced by $\frac{1}{n} \sum_{t=1}^n tr \big(\Gamma_t P_t \Gamma_t^T +D_t K_{Z_t} {D_t}^T)\big) $, 
%(\ref{conve_2_def}) holds with,  $\frac{1}{2n}    \sum_{t=1}^n   \log \big( \frac{ \det \big( {\bf C} \Pi_t {\bf C}^T + {\bf D} K_{\overline{W}}{\bf D}^T \big)    }{\det\big(C \Sigma_t C^T +N K_{W} N^T\big)}\big)    $ replaced by $\frac{1}{2n}    \sum_{t=1}^n \log \big( \frac{ \det \big( {\bf C}_t \Pi_t {\bf C}_t^T + {\bf D}_t K_{\overline{W}_t}{\bf D}_t^T \big)    }{\det\big(C_t \Sigma_t C_t^T +N_t K_{W_t} N_t^T\big)}\big)$.
% where $\Sigma_{n+1}^o = \Sigma_{n+1}^o(\Sigma_n^o,A,B,C,N,K_W)$, $\Pi_{n+1}^o=\Pi_{n+1}^o(\Pi_{n}^o,\Sigma_{n}^o,{\bf A}, {\bf B},{\bf C},{\bf D}, K_{\overline W})$,  $ P^o_{n+1} = P^o_{n+1}(P_n^o,F_n,G_n,\Gamma_n,D_n, K_{Z_n}),\;n=1,2, \ldots$.
%The rest of the derivation is a repeat of  the derivation of Theorem~\ref{the:inter}.
\end{proof}

\begin{remark} 
%By Theorem~\ref{prob_1_in} and Theorem~\ref{thm:limsup_ati}, Problems \#1 and \#2 are identical, i.e.,  the $\lim \sup$ can be replaced by  $\sup \lim$, of Case 1 or Case 2, and these are equal to $C^\infty(\kappa)$.\\
If the {\it stabilizability condition} is replaced by the {\it unit circle controllability} (see \cite{kailath-sayed-hassibi} for definition and \cite{charalambous2020new,louka-kourtellaris-charalambous:2020b,kourtellaris-charalambous-loyka:2020b,charalambous-kourtellaris-loykaIEEEITC2019} for specific examples), then Theorem~\ref{prob_1_in} and Theorem~\ref{thm:limsup_ati} remain valid with the fundamental difference that all limits are not uniform  for all ${\bf P}_{V_1}$. For such a relaxation, the limits depend on ${\bf P}_{V_1}, \Sigma_1, P_1,  \Pi_1$ and the asymptotic optimization problem $C^\infty(\kappa)$ may not be convex. Specific examples are found in \cite{louka-kourtellaris-charalambous:2020b}. 
%which are confirmed with solutions produces by numerical methods using convex programming algorithms. 
\end{remark}
 
\section{Conclusion}
This paper presents new asymptotic characterizations of nonfeedback capacity of MIMO additive Gaussian noise (AGN) channels,  when the noise is nonstationary and unstable. The asymptotic characterizations of nonfeedback capacity,  involve two generalized matrix algebraic Riccati equations (AREs) of filtering theory and a Lyapunov matrix equation of stability theory of Gaussian systems. Identified, are conditions for uniform convergence of the asymptotic limits, which imply that the nonfeedback capacity is independent of the initial states. 

%\section{Acknowledgements}
%This work was supported in parts by  the European Regional Development Fund and
%the Republic of Cyprus through the Research Promotion Foundation Projects EXCELLENCE/1216/0365.

%{\bf Acknowledgements.} The authors wish to gratefully acknowledge Prof. Jan H. van Schuppen for  various technical discussions.  

\newpage 

\bibliographystyle{IEEEtran}

\bibliography{Bibliography_capacity}

% Generated by IEEEtran.bst, version: 1.14 (2015/08/26)
\begin{thebibliography}{10}
\providecommand{\url}[1]{#1}
\csname url@samestyle\endcsname
\providecommand{\newblock}{\relax}
\providecommand{\bibinfo}[2]{#2}
\providecommand{\BIBentrySTDinterwordspacing}{\spaceskip=0pt\relax}
\providecommand{\BIBentryALTinterwordstretchfactor}{4}
\providecommand{\BIBentryALTinterwordspacing}{\spaceskip=\fontdimen2\font plus
\BIBentryALTinterwordstretchfactor\fontdimen3\font minus
  \fontdimen4\font\relax}
\providecommand{\BIBforeignlanguage}[2]{{%
\expandafter\ifx\csname l@#1\endcsname\relax
\typeout{** WARNING: IEEEtran.bst: No hyphenation pattern has been}%
\typeout{** loaded for the language `#1'. Using the pattern for}%
\typeout{** the default language instead.}%
\else
\language=\csname l@#1\endcsname
\fi
#2}}
\providecommand{\BIBdecl}{\relax}
\BIBdecl

\bibitem{gallager1968}
R.~T. Gallager, \emph{Information Theory and Reliable Communication}.\hskip 1em
  plus 0.5em minus 0.4em\relax John Wiley \& Sons, Inc., New York, 1968.

\bibitem{ihara1993}
S.~Ihara, \emph{Information theory for Continuous Systems}.\hskip 1em plus
  0.5em minus 0.4em\relax World Scientific, 1993.

\bibitem{cover-thomas2006}
T.~M. Cover and J.~A. Thomas, \emph{Elements of Information Theory},
  2nd~ed.\hskip 1em plus 0.5em minus 0.4em\relax John Wiley \& Sons, Inc.,
  Hoboken, New Jersey, 2006.

\bibitem{yeung2008}
R.~W. Yeung, \emph{Information Theory and Network Coding}, 1st~ed.\hskip 1em
  plus 0.5em minus 0.4em\relax Springer Publishing Company, Incorporated, 2008.

\bibitem{tsybakov2006}
B.~S. Tsybakov, ``A different approach to finding the capacity of a {G}aussian
  vector channel,'' \emph{Problems of Information Transmission}, vol.~42,
  no.~3.

\bibitem{brandenburg-wyner:1974}
L.~H. Brandenburg and A.~D. Wyner, ``Capacity of the {G}aussian channel with
  memory: The multivariate case,'' \emph{IEEE Transactions on Information
  Theory}, vol.~53, no.~5, pp. 745--778, {M}ay 1974.

\bibitem{gutierrez-crespo:2008}
J.~Gutirrez-Gutierrez and P.~M. Crespo, ``Asymptotically equivalent sequences
  of matrices and {H}ermitian block {T}oeplitz matrices with continuous
  symbols: Applications to {MIMO} systems,'' \emph{IEEE Transactions on
  Information Theory}, vol.~54, no.~5, pp. 5671--5680, {D}ecember 2008.

\bibitem{gutierrez-crespo-rodriguez-hogstad:2017}
J.~Gutirrez-Gutierrez, P.~M. Crespo, M.~Zarraga-Rodriguez, and B.~O. Hogstad,
  ``Asymptotically equivalent sequences of matrices and capacity of a
  discrete-time {G}aussian {MIMO} channel with memory,'' \emph{IEEE
  Transactions on Information Theory}, vol.~9, no.~63, pp. 6000--6003,
  {S}eptember 2017.

\bibitem{hirt-massey:1988}
W.~Hirt and J.~L. Massey, ``Capacity of the discrete-time {G}aussian channel
  with intersymbol interference,'' \emph{IEEE Transactions on Information
  Theory}, vol.~56, no.~34, pp. 380--388, {M}ay 1998.

\bibitem{yanaki1992}
K.~Yanaki, ``Necessary and sufficient conditions for the capacity of the
  discrete-time {G}aussian channel to be increased by feedback,'' \emph{IEEE
  Transactions on Information Theory}, vol.~38, pp. 1788--1791, {M}ay 1992.

\bibitem{yanaki1994}
------, ``An upper bound on the discrete-time {G}aussian channel with
  feedback-{II},'' \emph{IEEE Transactions on Information Theory}, vol.~40, pp.
  588--593, {M}ay 1994.

\bibitem{chen-yanaki1999}
H.~W. Chen and K.~Yanaki, ``Refiniements of the half-bit and factor-of-two
  bounds for capacity in {G}aussian channels with feedback,'' \emph{IEEE
  Transactions on Information Theory}, vol.~45, pp. 316--325, {J}anuary 1999.

\bibitem{kourtellaris-charalambous-loyka:2020a}
C.~Kourtellaris, C.~D. Charalambous, and S.~Loyka, ``From feedback capacity to
  tight achievable bounds without feedback for {AGN} channels with stable and
  unstable autoregressive noise,'' in \emph{IEEE International Symposium on
  Information Theory (ISIT)}, July 21-26 2020, pp. 2091--2096.

\bibitem{charalambous2020new}
\BIBentryALTinterwordspacing
C.~D. Charalambous, C.~Kourtellaris, and S.~Louka, ``New formulas of feedback
  capacity for {AGN} channels with memory: A time-domain sufficient statistic
  approach,'' submitted for publication, August 2020. [Online]. Available:
  \url{https://arxiv.org/abs/2010.06226}
\BIBentrySTDinterwordspacing

\bibitem{charalambous-kourtellaris-loykaIEEEITC2019}
\BIBentryALTinterwordspacing
C.~D. Charalambous, C.~Kourtellaris, and S.~Loyka, ``Time-invariant feedback
  strategies do not increase capacity of {AGN} channels driven by stable and
  certain unstable autoregressive noise,'' 13 August 2021. [Online]. Available:
  \url{https://arxiv.org/abs/1907.10991v3}
\BIBentrySTDinterwordspacing

\bibitem{louka-kourtellaris-charalambous:2020b}
S.~Louka, C.~Kourtellaris, and C.~D. Charalambous, ``Qualitative analysis of
  feedback capacity of {AGN} channels driven by unstable versus stable
  autoregressive moving average noise,'' in \emph{2021 IEEE Information Theory
  Workshop (ITW), Kanazawa, Japan}, October 17-21 2021.

\bibitem{kourtellaris-charalambous-loyka:2020b}
C.~Kourtellaris, C.~D. Charalambous, and S.~Loyka, ``New formulas of ergodic
  feedback capacity of {AGN} channels driven by stable and unstable
  autoregressive noise,'' in \emph{IEEE International Symposium on Information
  Theory (ISIT)}, July 21-26 2020, pp. 2073--2078.

\bibitem{cover-pombra1989}
T.~Cover and S.~Pombra, ``{G}aussian feedback capacity,'' \emph{IEEE
  Transactions on Information Theory}, vol.~35, no.~1, pp. 37--43, {J}an. 1989.

\bibitem{charalambous-kourtellaris-louka:2020a}
C.~D. Charalambous, C.~Kourtellaris, and S.~Louka, ``Sequential
  characterizations of {C}over and {P}ombra {G}aussian feedback capacity:
  Generalizations to {MIMO} channels via sufficient statistic,'' in \emph{2021
  IEEE Information Theory Workshop (ITW), Kanazawa, Japan}, October 17-21 2021.

\bibitem{kourtellaris-charalambousIT2015_Part_1}
C.~Kourtellaris and C.~D. Charalambous, ``Information structures of capacity
  achieving distributions for feedback channels with memory and transmission
  cost: Stochastic optimal control \& variational equalities,'' \emph{IEEE
  Transactions on Information Theory}, vol.~64, no.~7, pp. 4962--4992, July
  2018.

\bibitem{charalambous-kourtellaris-loykaIT2015_Part_2}
C.~D. Charalambous, C.~Kourtellaris, and S.~Loyka, ``Capacity achieving
  distributions and separation principle for feedback {G}aussian channels with
  memory: The {LQG} theory of directed information,'' \emph{IEEE Transactions
  on Information Theory}, vol.~64, no.~9, pp. 6384--6418, September 2018.

\bibitem{kailath-sayed-hassibi}
T.~Kailath, A.~Sayed, and B.~Hassibi, \emph{Linear Estimation}.\hskip 1em plus
  0.5em minus 0.4em\relax Prentice Hall, 2000.

\bibitem{caines1988}
P.~E. Caines, \emph{Linear Stochastic Systems}, ser. Wiley Series in
  Probability and Statistics.\hskip 1em plus 0.5em minus 0.4em\relax John Wiley
  \& Sons, Inc., New York, 1988.

\bibitem{vanschuppen2021}
J.~H. van Schuppen, \emph{Control and System Theory of Discrete-Time Stochastic
  Systems}.\hskip 1em plus 0.5em minus 0.4em\relax Springer, 2021.

\bibitem{pinsker1964}
M.~Pinsker, \emph{Information and Information Stability of Random Variables and
  Processes}.\hskip 1em plus 0.5em minus 0.4em\relax Holden-Day Inc, San
  Francisco, 1964, translated by Amiel Feinstein.

\end{thebibliography}

\end{document}